\def\p{{\partial}}
\def\Re{\mathop{\text{Re}}}
\def\Im{\mathop{\text{Im}}}
\def\dist{\mathop{\text{dist}}}
\newcommand{\Cov}{\mathrm{Cov}}
\newtheorem{prop}{Proposition}
\newtheorem{lem}{Lemma}
\newtheorem{cor}{Corollary}
\theoremstyle{remark}
\newtheorem{remark}{Remark}
\title{Martingales of stochastic Laplacian growth}
\author[O. Alekseev]{Oleg Alekseev}
\email{teknoanarchy@gmail.com}
\address{
Chebyshev Laboratory, Department of Mathematics and Computer Science, Saint-Petersburg State University, 14th Line, 29b, 199178, Saint-Petersburg, Russia
}
\begin{document}

\begin{abstract}
A family of exponential martingales of a stochastic Laplacian growth problem is proposed. Stochastic Laplacian growth describes a regularized interface dynamics in a two-fluid system, where the viscous fluid is incompressible at a large scale, while compressible at a small scale in the vicinity of the interface. Hence, random fluctuations of pressure near the boundary are inevitable. By using Loewner-Kufarev equation, we study interface dynamics generated by nonlocal random Loewner measure, which produces the patterns with viscous fingers. We use a Schottky double construction to introduce a one-parametric family of functions of random processes on the double closely connected to the correlation functions of primary operators of the boundary conformal field theory in the Coulomb gas framework. For a specific value of the parameter, these functions are martingales with respect to stochastic Loewner flow on the Schottky double. A connection between the proposed algebraic construction and the physical problem of stochastic interface dynamics relies on the Hadamard's variational formula. Namely, the variation of pressure in stochastic Laplacian growth near the interface is given by the covariance of martingales on the double.
\end{abstract}
\maketitle

\section{Introduction}

Loewner approach of studying increasing families of plane compact sets is an important tool in complex analysis since 1920's. This approach allows one to parametrize families of conformal mappings of references domain, e.g., the upper half plane or the unit disk, in terms of real-valued functions on the real line or the unit circle. The Schramm-Loewner evolution (SLE) proposed by Oded Schramm in the early 2000's generalizes the original Loewner method to the case of growing random fractal curves in the plane domain that satisfy conformal invariance and domain Markov property~\cite{Sch00}. From the physical point of view SLE curves describe interfaces in the continuum limits of various lattice models of statistical mechanics, such as the boundaries of spin clusters in the Ising and Potts models, percolation paths, etc.

A more general family of fractals are generated in nonlocal processes, where the growth takes place at multiple points of the interface simultaneously. These processes can be studied by the generalized version of Loewner equation, called Loewner-Kufarev equation~\cite{Kufarev}. In this equation, the driving function is replaced by the driving measure on the real line or the unit circle. An important example of nonlocal deterministic growth process is Laplacian growth (LG) or the Hele-Shaw problem (see Ref.~\cite{BensimonRMP} for a review). This diffusion-driven growth process embraces numerous free boundary dynamics including bidirectional solidification, dendrite formation, electrodeposition, dielectric breakdown, bacterial growth, and flows in porous media~\cite{PelceBook}. LG stands out against the other nonlocal growth processes, because of its powerful properties, unusual for most nonlinear growth processes, such as infinitely many conservation laws~\cite{Ric72} and closed form exact solutions~\cite{SB84}. A new splash of interest to LG was provoked by establishing strong connections of the interface dynamics to major integrable hierarchies, and the theory of random matrices~\cite{MWZ00,KKMWWZ01}.

LG is characterized by the finger-like unstable patterns featuring the formation of cusps at the boundary in a finite time~\cite{SB84}. Hence, the problem is ill defined, and a certain regularization is required. The conventional hydrodynamical regularization is realized through a surface tension, when pressure along the interface is proportional to its curvature. The universal fractal structures are observed in the zero-surface-tension limit. However, surface tension is a singular perturbation of the model, i.e., it has a profound effect upon the evolution of the interface. Hence, the zero-surface-tension limit of the LG problem is rather tricky. Fortunately, there exist alternative methods for regularizing the problem. Probably, the best known are various random walk models of the Hele-Shaw behavior.

In this paper we consider stochastic LG, which describes nonsingular diffusion driven stochastic interface dynamics in the Hele-Shaw cell. A discrete model, which underlies this process, is based on the random walk model of the LG problem, namely, diffusion-limited aggregation (DLA)~\cite{WS81}. DLA is a process where equal particles with a small size $\hbar$ are issued one by one from a distant source. These particles diffuse as random walkers until hit the domain and stick to it. Numerical simulations of DLA show the formation of branching graphs with a width controlled by the size of the particle $\hbar$. The typical aggregates take ramified shapes, and develop branches and fjords of various scales. Remarkably, however, that all grown clusters appear to be monofractals with the numerically obtained Hausdorff dimension $D_H=1.71\pm0.01$, which appears to be robust and universal~\cite{Hal00}. DLA is believed to provide a discrete variant of LG, where the particle size, $\hbar$, plays a role of a short-distance cutoff regularizing singularities emerging in fluid dynamics. The relation of stochastic (DLA) and deterministic (LG) processes remains unclear (see, however, Refs.~\cite{Levitov98,Makarov01}). A natural way to proceed is to consider the continuum limit of stochastic growth, which appears to be non-trivial. It can be anticipated that the DLA dynamics is equivalent to Laplacian contour dynamics, where the finite-time singularities are resolved on microscale due to noise. 

Recently, it was proposed, that LG and DLA can be unified as two opposite limits of a discrete stochastic LG model~\cite{AMW16}. In this model the distant source emits $K\geq1$ uncorrelated particles per time unit. However, no motion of the interface occurs until all of them hit the boundary. The advance of the interface along the unit normal vector is then proportional to the number of particles, which hit the given portion of the boundary. Although the continuum limit of this discrete process is not rigorously obtained yet, a physical intuition based on statistical mechanics arguments allows one to introduce a continuum analog of this model~\cite{QLG1,Ale19a,Ale19b}.

In this work we continue to study stochastic LG in the continuum framework. The model is described by Loewner-Kufarev equation with a random Loewner measure. It possesses a family of martingales~\cite{Ale19b} closely connected to certain correlation functions of conformal field theory (CFT)~\cite{BPZ,DiFrancescoBook}. Let us briefly recall the basic arguments. One can couple various lattice models of statistical mechanics (described by CFTs at critical points in the continuum limit) to any planar domains. The Loewner chain generates a sequence of conformal transformations of the domain, which act in the Hilbert space of CFTs by the Virasoro algebra spanned by the generators $L_n$, $n\in\mathbb Z$, with the commutation relations $[L_n,L_m]=(n-m)L_{n+m}+(c/12)(m^3-m)\delta_{m+n,0}$, where the real constant $c$ is the central charge. The Verma modules $V_{c,h}$ with the highest weights $h$, form the highest weight representation  of the algebra. The Verma module, $V_{c,h}$, can be reducible for certain values of $c$ and $h$, because it may contain the so-called null vectors, i.e., the states in the module, $|\chi\rangle\in V_{c,h}$, that are both primary and descendant, i.e., $L_n|\chi\rangle=0$ for $n>0$. The relevant example is the null vector at the second level. Requiring the decoupling of the null vector from every correlation function leads to the second order differential equation for the correlation functions containing the highest weight vector of the degenerate module.

Remarkably, the same second order differential equation also arises when considering stochastic dynamics of interfaces in the plane. The best known example is SLE, where this equation appears as the necessary condition for certain functions of random processes to be martingales. Therefore, it is natural to expect, that the null vector constraint for CFT coupled to the LG problem can be also related to certain martingales of the interface dynamics.

Below, we will try to avoid any application of CFT technique, and study stochastic LG by the conventional methods of stochastic calculus. The only nontrivial algebraic construction we used below is a Schottky double of the domain complementary to the growing cluster. Note, that this construction has already been appeared in the context of deterministic LG problem~\cite{Gustafsson83,Krichever04}. One can show, that the interface dynamics in the LG problem can be considered as an evolution of a certain curve on the Schottky double. As will be detailed shortly, the martingales are random processes on the double associated with this stochastic dynamics. A direct connection between the proposed martingales and stochastic LG problem relies on Hadamard's variational formula for the Green's function of the Dirichlet boundary problem. We will show, that the variation of pressure near the interface can be written in terms of the covariance of the martingales on the Schottky double.

The article is organized as follows. Section~\ref{s:slg} recalls necessary background on Loewner chains, and both deterministic and stochastic LG problems of simply connected planar domains. We also recall a relation between LG and the Dirichlet boundary problem. This helps to introduce Hadamard's formula for the variation of the Dirichlet Green's function under smooth deformations of the domain in the stochastic LG problem. Section~\ref{s:martingales} is devoted to martingales of stochastic LG. We consider the Schottky double of the domain complementary to the growing cluster, and define a Loewner flow on the double by continuously continuing the flow from the one side of the double to another across the interface. Then, we introduce a one-parametric family of exponential functions of random processes on the Schottky double, and study their variation with respect to the stochastic flow. For specific value of the parameter, these functions are shown to be the martingales of stochastic Loewner flow. Finally, in Section~\ref{s:conclusion} we draw our conclusion and discuss some open problems.

\section{Laplacian growth}\label{s:slg}

\subsection{Loewner chains.} Let us consider a continuously \textit{decreasing} set of simply connected planar domains, $D_t$, $0\leq t<\infty$, and refer to the continuous label $t$ as to time. By the Riemann mapping theorem there exist conformal maps $z_t: \mathbb D\to D_t$ from the exterior of the unit disk, $\mathbb D$, in the auxiliary $w$ plane to the domains $D_t$ in the physical $z$ plane (see Fig.~\ref{map}). In order to make the map unique, one imposes the constraints: $z_t(\infty)=\infty$, and $z_t'(\infty)=r_t$ is a real-valued positive function of time called conformal radius~\footnote{Here and below dot and prime denote the partial derivatives with respect to time and coordinate respectively.}. A family of functions $(z_t)_{t\geq0}$ is called a Loewner chain. The Loewner equation describes the evolution of the Loewner chain $(z_t)_{t\geq0}$ with time by the following partial integro-differential equation:
\begin{equation}\label{z_eq}
	\frac{\p z_t(w)}{\p t}=-wp_t(w)\frac{\p z_t(w)}{\p w},
\end{equation}
where the function $p_t(w)$ is measurable for $t\geq0$ for all $w\in\mathbb D$, holomorphic in $w\in \mathbb D$, and has the negative real part, $\Re p_t(w)<0$ for $w\in\p\mathbb D$. It can be written as follows:
\begin{equation}\label{p-def}
	p_t(w)=\int_0^{2\pi}\frac{d\phi}{2\pi}\frac{e^{i\phi}+w}{e^{i\phi}-w}\rho_t(e^{i\phi}),
\end{equation}
where the Loewner density, $\rho_t(w)$, is a real-valued positive function on the unit circle, which drives the time evolution of the chain. Given the initial condition, $z_0(w)=w$, the Loewner-Kufarev equation admits an unique solution.  Note, that the function $p_t(w)$ has a jump across the unit circle:
\begin{equation}\label{p-limits}
	p_t(w)=\Re p_t(w)+i\Im p_t(w),\qquad p_t(1/\bar w)=-\Re p_t(w)+i\Im p_t(w),
\end{equation}
provided $w=\lim_{\epsilon\to 0}(1+\epsilon)e^{i \theta}$.

\begin{figure}[t]
\centering
\includegraphics[width=.6\columnwidth]{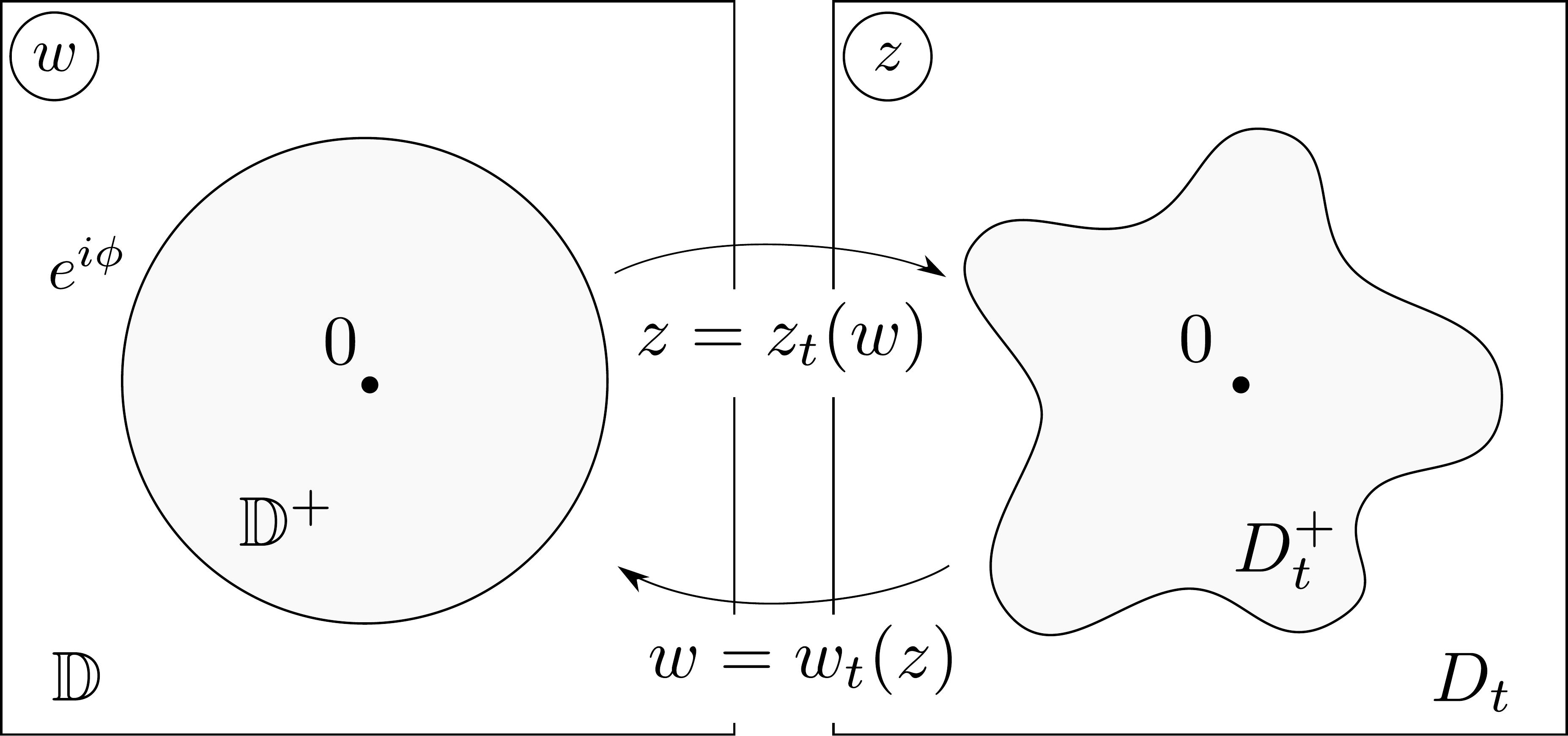}
\caption{\label{map}
The conformal map $z=z_t(w)$ from the exterior of the unit disk $\mathbb D$ in the auxiliary $w$ plane to the exterior of the growing cluster $D_t$ in the $z$ plane. The map is normalized as follows: $z_t(\infty)=\infty$, and $z'_t(\infty)=r_t$, where the conformal radius, $r_t$, is a real-valued function of time $t$.
}
\end{figure}

The Loewner density determines the normal velocity of the contour $\p D_t$ with respect to the time evolution generated by Loewner-Kufarev equation~\eqref{z_eq}. Indeed, the unit complex tangent vector to the contour is $\tau=i w z'_t(w)/|z'_t(w)|$, and the outer normal unit vector is $n=-i \tau$. Thus, by using Loewner equation~\eqref{z_eq}, and taking into account that $\Re p_t(u)=-\rho_t(u)$, $u\in \p\mathbb D$, the normal interface velocity, $v_n(z)=\Re[\bar n\p_t z_t(w)]$, at $z=z_t(e^{i\phi})\in \p D_t$ reads
\begin{equation}\label{v-rho}
	v_n(z_t(e^{i\phi}))=|z'_t(e^{i\phi})|\rho_t(e^{i\phi}).
\end{equation}

Let us briefly mention some particular examples of Loewner chains studied in the literature. The most known example is Loewner evolution generated by a singular Dirac measure, $\rho(e^{i\phi})d\phi = \delta(e^{i\phi}-e^{i\phi_0})d\phi$, on the unit circle. Because of the locality of the measure, Loewner evolution describes the growing curve in the exterior (interior) of the unit disk~\cite{Loe23}. In the case when $e^{i\phi_0}$ is a position of the Brownian particle, the \textit{random} evolution of the curve is known as the Schramm-Loewner evolution (SLE)~\cite{Sch00}. The locality and continuity of the Brownian motion implies the absence of branching of SLE curves.

Another natural examples rely on absolutely continuous measures. In these cases the domain $D_t$ grows at infinitely many points of the boundary simultaneously. In particular, if the density is $\rho_t(w)=|z'_t(w)|^{-2}$, $w\in\p\mathbb D$, then the normal interface velocity~\eqref{v-rho} is proportional to the harmonic measure of the boundary, $v_n(\zeta)\propto |w'_t(\zeta)|$. Hence, this case corresponds to the so-called Hele-Shaw flow or the deterministic LG problem.

The harmonic measure is important in what follows. Thus, it is instructive to recall its definition and basic properties (see Ref.~\cite{GarnettBook} for detail). Let us consider a two-dimensional Brownian motion $(B_s)_{s\geq0}$ starting from $z\in D$, and let $T=T(D)$ be the first exit time. Then, $B_s$ converges to $\hat B_s\in \p D_t$ as $s\uparrow T$, with limit $\hat B_T$ distributed at $\p D$. Then, the harmonic measure, $\omega_{D}(z,V)$, in $D$ from $z$ of a given portion of the boundary $V\in\p D$ is $\omega_{D}(z,V)=  \mathbb P^z[\hat B_T\in V]$.

The harmonic measure is conformally invariant, i.e., $\omega_D(z,V)=\omega_{D'}(f(z),f(V))$, where $f:D\to D'$. When $\p D$ is a Jordan curve, the harmonic measure has a density called a Poisson kernel, $H_{D}(z,s)$, namely, $\omega_{D}(z,|ds|)=H_D(z,s)|ds|$. In particular, when $D$ is the exterior of the unit disk, $\mathbb D$, the Poisson kernel is
\begin{equation}
	H_{\mathbb D}(w,e^{i\phi})=\Re\frac{w+e^{i\phi}}{w-e^{i\phi}},\qquad |w|>1.
\end{equation}
The conformal invariance of the harmonic measure gives the following transformation of the Poisson kernel
\begin{equation}\label{H-transform}
	H_D(z,s)=|f'(s)|H_{D'}(f(z),f(s)),
\end{equation}
where $f:D\to D'$. Thus, if $w_t:D_t\to\mathbb D$ is the conformal map from $D_t$ to the exterior of the unit disk $\mathbb D$, one can obtain an explicit expression for $H_{D_t}$ (denoted by $H_t$ below), namely,
\begin{equation}\label{H-w}
	H_{t}(z,s)=|w'_t(s)|\Re\frac{e^{i\phi}+\xi}{e^{i\phi}-\xi}, \qquad e^{i\phi}=w_t(s),\ \xi=\bar w_t(\bar z)^{-1},
\end{equation}
where bar denotes complex conjugation $\overline{f(z)}=\bar f(\bar z)$.

Below, we consider mainly the inverse conformal map, $w_t:D_t\to\mathbb D$, from the domain $D_t$ to the exterior of the unit disk, $\mathbb D$, in the $w$ plane normalized so that $w_t(\infty)=\infty$ and $w'_t(\infty)=r_t^{-1}$. The Loewner-Kufarev equation for the inverse map can be obtained by using the characteristic equation for~\eqref{z_eq}:
\begin{equation}\label{w_eq}
	\frac{d w_t(z)}{d t}=p_t(w_t(z))w_t(z).
\end{equation}
This equation describes the evolution of the contour in the $w$ plane. Because $\Re p_t(w)<0$, the contour $w_{t+d t}(z)$, where $z\in\p D_t$, lies inside the unit disk, i.e., the contour in the $w$ plane contracts to the origin with time.

It is instructive to recall basics facts about deterministic Hele-Shaw problem first. Afterwards, we will introduce a stochastic LG model in order to study regularized interface dynamics in the Hele-Shaw cell.

\subsection{Deterministic Laplacian growth}\label{s:LG}
The physical formulation of deterministic LG is deceptively simple. Let $D_t^+$ be a simply connected domain occupied by inviscous fluid or gas between two parallel close plates (Hele-Shaw cell). The inviscous fluid is surrounded by a viscous fluid (oil) occupying the rest of the cell, $D_t=\mathbb C\setminus D_t^+$. The viscous fluid is sucked out the cell by several oil wells in $D_t$. The Navier-Stokes equation for a quasistatic motion of a viscous fluid in the Hele-Shaw cell reduces to the Darcy's law: the fluid velocity, $\textbf{v}=-\nabla P_t$, equals the pressure gradient (in scaled units), where $P_t(z,\bar z)$ is pressure in viscous fluid, and $z=x+i y$, $\bar z=x - i  y$ are complex coordinates.

Because of incompressibility, $\nabla\cdot\mathbf{v}=0$, pressure satisfies Laplace equation $\nabla^2 P_t(z)=0$ in $D_t$, except the points with sinks or sources of viscous fluid which provide growth. If to neglect surface tension, $P_t(z)=const$ at the interface $z\in \p D_t$\footnote{Without loss of generality one can set $P_t(z)=0$ at $z\in \p D_t$.}. In the case, when the only oil sink is located at infinity, pressure is a harmonic function in $D_t$, which diverges logarithmically as $z\to\infty$. Thus, it is proportional to the Green's function of the Dirichlet boundary problem, $P_t(z)=-(Q/2\pi)G_t(z,\infty)$, where $G_t(z,s)$ is the Dirichlet Green's function in $D_t$, and $Q$ is the rate of the source~\footnote{The rate of the source is determined by the area of fluid sucked out the cell during a time unit.}.

Recall, that the Green's function $G_t(z,s)$ in the domain $D_t$ is  uniquely determined by the following properties: (a) The function $G_t(z,s)-\log |z-s|$ is symmetric, bounded, and harmonic everywhere in $D_t$ in both arguments; (b) $G_t(z,s)=0$ if $s\in \p D_t$. The Green's function can be written explicitly in terms of the uniformization map of the domain $D_t$. In particular, by using the map $w_t: D_t\to \mathbb D$, one obtains the Green's function in $D_t$:
\begin{equation}\label{G-def}
	G_t(z,s)=\log \left|\frac{w_t(z)-w_t(s)}{1-\overline{w_t(z)}w_t(s)}\right|.
\end{equation}

Now, returning to the Hele-Shaw problem, and taking account of the kinematic identity, which equates the normal interface velocity, $v_n(s)$, and the fluid normal velocity at $s\in\p D_t $, one obtains $v_n(s)=-\p_n P_t(s)$, where $\p_n$ stands for the normal derivative with the unit normal vector $n$ pointing inside the domain $D_t$. Since $P_t(z)\propto G_t(z,\infty)$, from eq.~\eqref{G-def} one determines the normal velocity, $v_n(s)=(Q/2\pi)\log |w'_t(s)|$, of the interface. Hence, the idealized (without surface tension) deterministic LG problem with the oil sink at infinity with the rate $Q$ is similar to the Loewner chain driven by the density~\cite{SB84}:
\begin{equation}\label{rho-lg} 
	\rho_t(e^{i\phi})=(Q/2\pi)|z'_t(e^{i\phi})|^{-2}.
\end{equation}
One can show, that Loewner-Kufarev equation with density~\eqref{rho-lg} is equivalent to the following nonlinear partial differential equation:
\begin{equation}\label{lg-eq}
	\Im\left[\p_t \bar z_t(e^{-i\phi})\p_\phi z_t(e^{i\phi})\right] = Q/ 2\pi.
\end{equation}
This is a classical LG equation, which was intensely studied earlier.

Although the interface dynamics in the deterministic LG problem obeys nonlinear dissipative equation of motion~\eqref{lg-eq}, it is integrable. Namely, the problem possesses infinitely many conservation laws~\cite{Ric72}, and exact solutions can be obtained in a closed form~\cite{SB84}. However, the idealized deterministic LG is ill-posed: the interface develops cusp-like singularities in a finite time~\cite{SB84}. Therefore, the problem needs to be regularized. One possibility is to use a conventional hydrodynamical regularization realized through a surface tension. In this case, the pressure at the interface is proportional to its curvature, $P(s)=-\chi\kappa(s)$, $s\in\p D_t$, where the constant $\chi$ is the coefficient of the surface tension, and the real-valued function $\kappa(s)$ is the curvature of the boundary.\footnote{The curvature of the interface at $s\in\p D_t$ can be written in terms of the uniformization map $w_t$ as follows: $\kappa(s)=\p_n\log \left(|w_t(s)|/|w'_t(s)|\right)$. We also note, that the Hele-Shaw flow regularized by surface tension is equivalent to the Loewner chain with density
\begin{equation}\nonumber
	\rho_t(u)=\frac{1}{|z_t'(u)^2|}\left[\frac{Q}{2\pi}+\chi\Re\left(u\p_u\int_0^{2\pi}\frac{d\phi}{2\pi}\frac{u+e^{i\phi}}{u-e^{i\phi}}\kappa[z_t(e^{i\phi})]\right)\right].
\end{equation}
}. However, nonzero surface tension destroys a rich mathematical structure of idealized LG, and complicates the analytical analysis. This motivates us to consider another possible regularizations of problem.

\subsection{Discrete stochastic Laplacian growth}\label{ss:dslg}

In this section we briefly recall a model for the regularized interface dynamics in the Hele-Shaw cell. The model is based on the stochastic realization of the Darcy's law, namely, diffusion-limited aggregation (DLA)~\cite{WS81}.

Discrete stochastic LG generalizes DLA in what follows. Instead of one particle per time unit the distant source emits $K\geq1$ uncorrelated Brownian particles one-by-one. However, no motion of the boundary occurs until all of them hit the interface. The advance of the interface, $\delta h(s)$, along the normal vector $n(s)$ at the boundary point $s\in\p D_t$ per time unit $\delta t$ is determined by the number of particles, $k(s)$, which hit the given arc of the boundary, $|ds|$, times the linear size, $\hbar^{1/2}$, of the particle:
\begin{equation}\label{deltah}
	\delta h(s) = k(s) \hbar^{1/2}, \quad s\in \p D_t.
\end{equation}
The total number of newly aggregated particles equals $\oint_{\p D}k(s)|d s|=K$.

Different outcomes of the aggregation process determine possible growth scenarios. When $K$ is large, all points of the interface advance simultaneously. Let $v_n(s)=\delta h(s)/\delta t$ be the instantaneous normal velocity of the interface at $s\in \p D_t$. Then, $v_n$ is a random variable with a multinomial distribution~\cite{AMW16}. In the large $K$ limit one can recast the multinomial distribution into Dyson's circular distribution of eigenvalues of random $N\times N$ matrices~\footnote{It is supposed, that the boundary, $\p D_t$, is divided into $N\gg1$ little arcs of the size $\sqrt\hbar$.}, whose eigenvalues are proportional to instantaneous  normal velocities of $N$ infinitesimal arcs of the boundary~\cite{QLG1}. Namely, various scenarios of the discrete stochastic LG obey the Gibbs-Boltzmann statistics:
\begin{equation}\label{P-v}
	\mathbb P[v_n]\propto\exp\left(\sum_{i=1}^N \sum_{j=1}^N v_n(s_i)\log |w_t(s_i)-w_t(s_j)|v_n(s_j)\right).
\end{equation}
The variation of eq.~\eqref{P-v} shows that $\mathbb P$ is maximal when the $v_n=\hat{v}_n$, where $\hat{v}_n(s)\propto|w_t'(s)|$, $s\in\p D_t$. As mentioned previously (see Section~\ref{s:LG}), this process is similar to idealized deterministic LG driven by a single source at infinity.

Fluctuations of $v_n$ around $\hat{v}_n$ are forbidden in idealized deterministic LG, because the viscous fluid is incompressible. However, the discreteness of the aggregation process, and the finite size of the particles effectively lead to the two-fluid model compressible on the microscale of the order $\sqrt\hbar$ in the vicinity of the interface, while in the bulk $\nabla\cdot \textbf{v}=0$. These fluctuations are a source of noise for the LG problem. In Ref.~\cite{Ale19a} we considered possible time evolutions of the fluctuations and showed, that the formation of viscous fingers is an intrinsic feature of the growth process. 

\subsection{Continuous stochastic Laplacian growth}\label{ss:cslg}

\begin{figure}[t]
\centering
\includegraphics[width=.7\columnwidth]{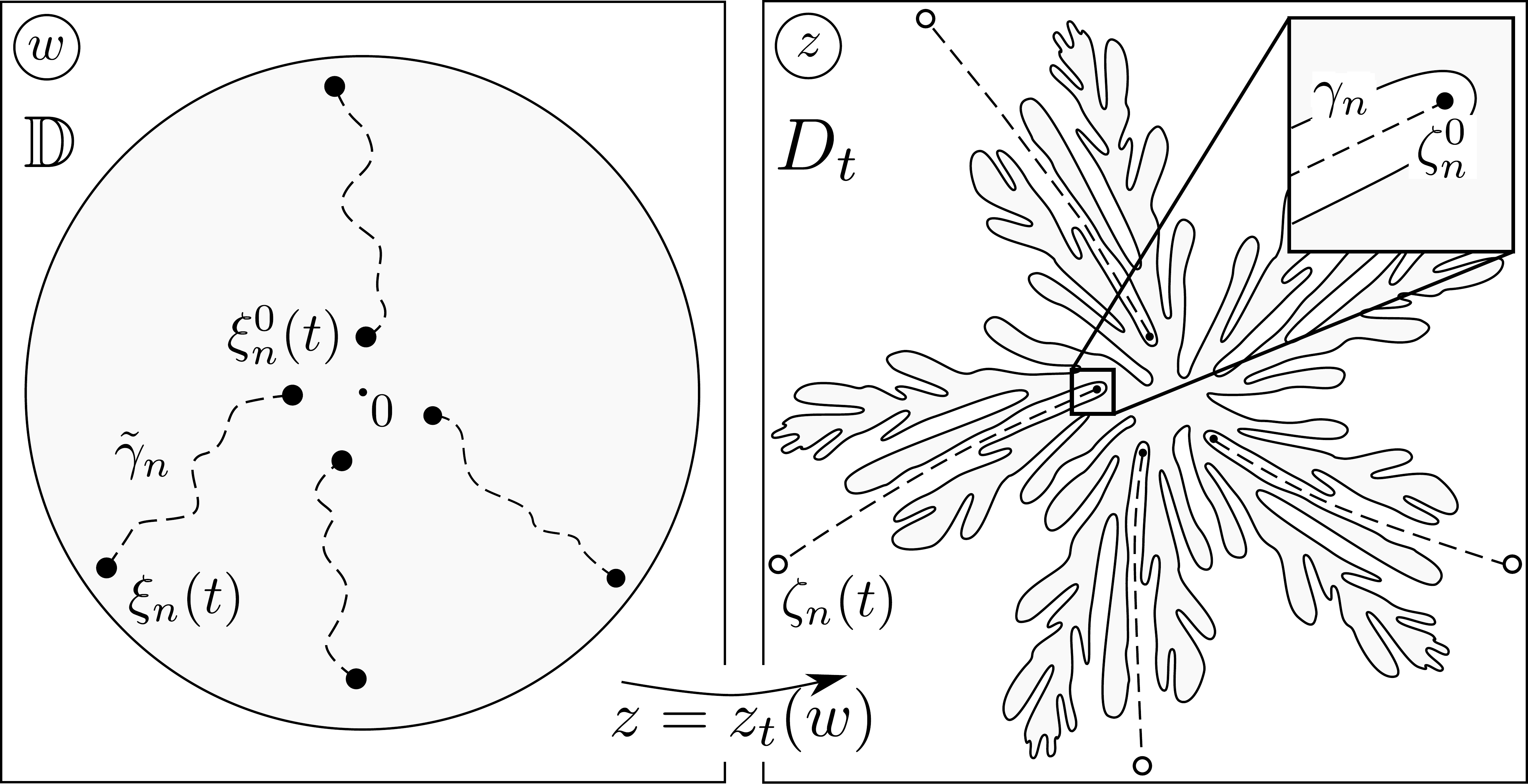}
\caption{\label{cft_domain}
The Loewner chain $(z_t)_{t\geq0}$ describes a family of conformal maps, $z_t: \mathbb D\to D_t$, from the complement of the unit disk $\mathbb D$ in the auxiliary $w$ plane to the domain $D_t$ in the $z$ plane. The map is normalized so that $z_t(\infty)=0$, and $\p_w z_t(\infty)=r_t>0$. The dashed lines on both planes represent the corresponding geometrical objects in both planes: the centerlines of fjords in the $z$ plane, and the branchcuts of $z_t(w)$ inside the unit disk in the $w$ plane. The centerline $\gamma_n$ with the endpoints $\zeta_n^0$ and $\zeta_n(t)$ is a branchcut of the Schwarz function. In stochastic LG the branchcuts $\gamma_n(t)$ evolve with time, because the endpoints $\zeta_n(t)$ moves toward infinity.
}
\end{figure}

We propose the following definition of the stochastic LG model in the continuum limit (see Refs.~\cite{Ale19a,Ale19b} for details). It can be represented as the Loewner chain $(z_t)_{t\geq0}$ generated by the density
\begin{equation}\label{rho-slg}
	\rho_t(e^{i\phi})=\frac{\nu}{|z_t'(e^{i\phi})|^2}\left[\sigma-\frac12\sum_{n=1}^N\alpha_n\Re\frac{e^{i\phi}+\xi_n(t)}{e^{i\phi}-\xi_n(t)}\right],
\end{equation}
where $\sigma=(Q/2\pi\nu) + \sum \alpha_n/2$, so that the total growth rate is $Q$. The quanta of the growth rate, $\nu=\hbar/\delta t$, is the cutoff parameter of the model. The driving processes $\xi_n(t)$ are semimartingales that satisfy the set of coupled stochastic differential equations:
\begin{equation}\label{dxi}
	\frac{d\xi_n}{\xi_n} = -\sigma dq_n + g_n(\bm{\xi},\bar{\bm{\xi}})dq_n + idW,\qquad n=1,2,\dotsc,N,
\end{equation}
with the initial conditions
\begin{equation}\label{xi-def}
	\xi_n(t=0)=\xi_n^0,\qquad n=1,2,\dotsc,N.
\end{equation}

In eq.~\eqref{dxi} we introduced a set of driving processes, $\bm{\xi}=\{\xi_1,\xi_2,\dotsc,\xi_N\}\in\mathbb D^+$, and the random function, $W(q)=\sqrt{\kappa/2}B(q)$, proportional to the standard Brownian motion $B(q)$ with the mean zero, $\mathbb E[W(q)]=0$ (where $\mathbb E$ is the expectation value), and covariance
\begin{equation}
	\Cov[W(q),W(q')]=(\kappa/2)\min(q,q').
\end{equation}

It is assumed that the auxiliary times $q_n(t)$ in the right hand side of eq.~\eqref{dxi} are connected to time $t$ by the following differential equation
\begin{equation}\label{qk}
	dq_n(t)=\nu|w_t'(\zeta_n^0)|^{2}dt.
\end{equation}
The points $\zeta_n^0=z_0(1/\bar\xi_n^0)$, $n=1,2,\dotsc,N$, are the constants of motion representing the positions of the tips of $N$ fjords in the long time asymptotic (see Ref.~\cite{Ale19b} for details).

\begin{remark}\label{r:scaling} Because of stochastic dynamics of the driving processes~\eqref{dxi}, the bottoms of the fjords have universal fractals shapes (see Fig.~\ref{single1}). Hence, the ordinary derivative, $w_t'(s)$, is not well-defined, while the harmonic measure does. The curve $\p D_t$ can be covered by the disks $B(s_i,\varepsilon)$ of radius $\varepsilon$ centered at the points $s_i\in \p D_t$. Let $p(s_i,r)=\omega(\p D_t\cap B(s_i,r))$ be the harmonic measure of the portion of the curve covered by $B(s_i,r)$. One can define a subset $\p D_t^{\alpha}\subset\p D_t$, consisting of the points $s\in\p D_t^{\alpha}$, such that $p(s,r)\sim(r/L)^\alpha$, as $r/L\to0$, where $L$ is the diameter of $D_t$. Hence $|w'_t(\zeta_n^0)|$ in eq.~\eqref{qk} should be replaced by $\varepsilon_n^\alpha$ in the vicinity of fractal boundary, where $\varepsilon_n$ is a distance between $\zeta_n^0$ and the interface. Then, $dq_n(t) \approx \nu \varepsilon_n^{2\alpha} dt$ in the long time asymptotic.
\end{remark}

Finally, the function $g_n(\bm{\xi},\bar{\bm{\xi}})$ in the right hand side of eq.~\eqref{dxi} reads
\begin{equation}\label{g-def}
g_n(\bm{\xi},\bar{\bm{\xi}})=-\frac{\kappa}{2}\xi_n\frac{\p}{\p \xi_n}\log Z_N(\bm{\xi},\bar{\bm{\xi}})+\frac12\sum_{m=1,m\neq n}^N\frac{\xi_n+\xi_m}{\xi_n-\xi_m}+\frac12\sum_{m=1}^N\frac{\xi_n+1/\bar\xi_m}{\xi_n-1/\bar\xi_m},
\end{equation}
where by $Z_N(\bm{\xi},\bar{\bm{\xi}})$ we denoted the following product
\begin{equation}\label{Z-def}
	Z_N(\bm{\xi},\bar{\bm{\xi}}) = \prod_{k<n}^N|\xi_k-\xi_n|^{4/\kappa}\prod_{k\leq n}^N|1-\xi_k\bar\xi_n|^{4/\kappa}.
\end{equation}

\begin{remark}
The function $Z_N$ is similar to the $N$-point correlation function of the vertex operators $V_{\alpha_n,\bar \alpha_n}(\xi_n,\bar\xi_n)$ in the CFT framework~\cite{DiFrancescoBook,Dotsenko84}. By substituting $Z_N$ to eq.~\eqref{g-def} one obtains $g_n=0$, so that the random processes~\eqref{dxi} do not interact with each other. Note, however, that the anzats for the functions $g_n(\bm{\xi},\bar{\bm{\xi}})$ and $Z_N(\bm{\xi},\bar{\bm{\xi}})$ was originally proposed for the correlation functions of primary operators~\cite{Ale19b}. They differ form the correlation function of vertex operators by insertions of the so-called screening operators. However, these insertions do not change the conformal properties of the correlation functions. Hence, the vertex operators can be used to study critical behavior of the correlation functions, relevant to multifractal analysis of the boundary. Another important advantage of~\eqref{Z-def} over the full correlation functions of primary operators is that in the present case the martingales of stochastic LG can be written in terms of elementary functions (see Proposition~\ref{p:1m} below).
\end{remark}

The initial condition for the Loewner-Kufarev equation~\eqref{z_eq} is $z_{0}(w)=w$, i.e., the domain $D^+_{t=0}$ in the unit disk. In deterministic LG~\eqref{lg-eq}, the initial disk continues to stay as a disk with a growing radius.  Complex irregular shapes typically observed at $t > 0$ are explained by tiny uncontrollable initial deviations from a perfect disk, growing because of instability of the process. In stochastic LG the complex shapes are attributed to random fluctuations of pressure, parametrized by the processes $\xi_n(t)$ in eq.~\eqref{rho-slg}.

The ansatz for the driving processes~\eqref{dxi} can by justified by statistical mechanics arguments~\cite{Ale19b}. Namely, by coupling a critical statistical system to $D_t$, and arguing that certain quotients of correlation functions of the corresponding CFT should be martingales with respect to stochastic Loewner chain with density~\eqref{rho-slg}, one can obtain a set of stochastic differential equations for the processes $\xi_n$, $n=1,2\dotsc,N$, consistent with the conformal symmetry of the statistical systems. From this point of view, the function $Z_N$ can be considered as the multi-point correlation function of $N$ degenerate (at the second level) fields in the Coulomb gas representation. Although, we will not follow this point of view here, the CFT interpretation provides a clue to the martingales of stochastic LG.

\subsection{Local fluctuations of  pressure in stochastic Laplacian growth}\label{ss:pressure}

Formally, the Loewner chain with density~\eqref{rho-slg} is similar to the idealized deterministic LG with the oil sink with the rate $Q$ at infinity, and $N$ oil sources with rates $\nu \pi\alpha_n$ at the points
\begin{equation}\label{zeta-def}
	\zeta_n(t)=z_t(1/\bar \xi_n(t)),\quad n=1,2,\dotsc,N.
\end{equation}
If $\zeta_n=const$ ($n=1,2,\dotsc,N$) these points have a clear physical interpretation as the positions of oil wells, which inject the oil into the Hele-Shaw cell with the rates $\nu\pi \alpha_n$. However, stochastic dynamics of the driving processes~\eqref{dxi} results in non-trivial evolution of the oil wells with time. We briefly review the pattern formation in stochastic LG in Appendix~\ref{a:patterns} (see Refs.~\cite{Ale19a,Ale19b} for details). Therefore, one can introduce the (scaled) effective pressure field which drives the interface dynamics:
\begin{equation}\label{P-def}
	P_t(z,\bm{\zeta})=\sigma G_t(z,\infty)-\frac12\sum_{n=1}^N \alpha_n G_t(z,\zeta_n(t)),
\end{equation}
where $G_t$ is the Dirichlet Green's function in $D_t$, defined in Section~\ref{s:LG}, the set of points $\bm{\zeta}=\{\zeta_1,\zeta_2,\dotsc,\zeta_N\}\in D_t$, so that $\xi_n=\bar w_t(\bar \zeta_n)^{-1}$, $n=1,2,\dotsc,N$.

Note, that the only physical oil sink is located at infinity, while the sources at $\zeta_n$'s can be referred to as the \textit{virtual} sources~\cite{AMW16}. The expression for pressure~\eqref{P-def} is valid in the vicinity of the boundary of the domain, because the short-distance regularization suggested by the aggregation model  effectively leads to \textit{compressible} fluid on a small scale of the order of $\hbar$ near the interface due to a finite size of the particles. In the bulk the fluid is incompressible, and the pressure is $P_t(z)=(Q/2\pi)G_t(z,\infty)$. Hence, the virtual sources parametrize local fluctuations of pressure in the vicinity of the boundary of the domain.

The time evolution of the domain, $D_t$, generated by the Loewner equation~\eqref{w_eq}, results in the evolution of the Green's function $G_t$ in $D_t$ with time. It can be determined by using Hadamard's formula, which gives the variation of the Green's function under smooth deformations of the boundary of the domain. Let $d  h(s) = v_n(s) d t$ be the thickness between the curve $\p D_t$ and the deformed curve $\p D_{t+d t}$ counted along the unit normal vector $n(s)$ pointing inside the domain $D_t$ at the point $s\in\p D_t$. Then, the variation of the Dirichlet Green's function reads:
\begin{equation}\label{Hvf}
	d G_t(z,z') = \frac{dt}{2\pi}\oint_{\p D_t}|ds|\p_n G_t(z,s)\p_n G_t(z',s)v_n(s),
\end{equation}
where $\p_n G_t(z,s)$ denotes the normal derivative of the Green's function on the boundary with respect to the second variable, and the unit normal vector points inside the domain $D_t$.

Now, after recalling Hadamard's variational formula for the Dirichlet Green's function the following Lemma can be proved:
\begin{lem}\label{l:dp}
Let us consider the effective pressure field, $P_t(s,\bm{\zeta})$, at $s\in\p D_t$. Then, the variation of $P_t(s,\bm{\zeta})$ with respect to the infinitesimal deformation of the domain generated by the Loewner chain with density~\eqref{rho-slg} reads
\begin{equation}\label{dP}
	d P_t(s,\bm{\zeta})=\left(\sigma-\frac12\sum_{n=1}^N \alpha_n\Re\frac{e^{i\phi}+\xi_n}{e^{i\phi}-\xi_n}\right)\left(\sigma-\frac12\sum_{m=1}^N \alpha_m\Re \frac{e^{i\phi}+\xi_m}{e^{i\phi}-\xi_m}\right)dq(t),
\end{equation}
where $e^{i\phi} = w_t(s)$, $\xi_n=\bar w_t(\bar \zeta_n)^{-1}$, $n=1,2,\dotsc,N$, and $dq(t)=\nu|w_t'(s)|^{2}dt$.
\end{lem}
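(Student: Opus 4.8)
The plan is to compute the total variation of $P_t(z,\bm\zeta)$ under the Loewner flow for an interior field point $z$ and only afterwards let $z\to s\in\p D_t$. From the definition~\eqref{P-def}, this variation splits into a domain-deformation part, in which the singularities $\infty$ and $\zeta_n$ are held fixed and only $G_t$ responds to the motion of $\p D_t$, and a source-motion part coming from the semimartingale dynamics~\eqref{dxi} of the points $\zeta_n(t)=z_t(1/\bar\xi_n(t))$ defined in~\eqref{zeta-def}. First I would dispose of the source-motion part. Since the $\zeta_n$ are semimartingales, Itô's formula produces, besides the first-order terms $\p_{\zeta_n}G_t(z,\zeta_n)\,d\zeta_n+\mathrm{c.c.}$, also quadratic-variation terms proportional to $\p^2 G_t$ and to $|d\zeta_n|^2$. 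The key observation is that all of these vanish once $z$ is taken to the boundary: by property (b) of the Green's function, $G_t(s,\zeta_n)=0$ for every interior $\zeta_n$ whenever $s\in\p D_t$, hence every derivative of $G_t(s,\cdot)$ in $\zeta_n$ and $\bar\zeta_n$ is identically zero. Thus only the domain-deformation part survives at the boundary, and the stochastic nature of the driving processes leaves no residue in $dP_t(s,\bm\zeta)$.

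For the domain-deformation part I would apply Hadamard's formula~\eqref{Hvf} termwise. Conditioned on the configuration at time $t$, the Loewner chain advances $\p D_t$ by the deterministic normal displacement $v_n\,dt$, so~\eqref{Hvf} applies directly and gives, for each fixed singularity $z_*\in\{\infty,\zeta_1,\dots,\zeta_N\}$,
\begin{equation}\nonumber
 d_{\mathrm{dom}}G_t(z,z_*)=\frac{dt}{2\pi}\oint_{\p D_t}|ds'|\,\p_n G_t(z,s')\,\p_n G_t(z_*,s')\,v_n(s').
\end{equation}
The next step is to identify the normal derivative of the Green's function with the Poisson kernel: differentiating the explicit expression~\eqref{G-def} along the inward normal and comparing with~\eqref{H-w} yields $\p_n G_t(z,s')=H_t(z,s')$, with the normalization fixed so that $\oint_{\p D_t}H_t(z,s')|ds'|=2\pi$; keeping this $2\pi$ straight is the one routine point of bookkeeping. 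Letting now $z\to s$, the harmonic measure seen from a point approaching the boundary concentrates at the limiting boundary point, that is $H_t(z,s')\,|ds'|\rightharpoonup 2\pi\,\delta_s$, which collapses the contour integral and gives $d_{\mathrm{dom}}G_t(s,z_*)=dt\,H_t(z_*,s)\,v_n(s)$.

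Assembling the pieces, I obtain
\begin{equation}\nonumber
 dP_t(s,\bm\zeta)=v_n(s)\,dt\left[\sigma H_t(\infty,s)-\tfrac12\sum_{n=1}^N\alpha_n H_t(\zeta_n,s)\right]=v_n(s)\,dt\,\p_n P_t(s,\bm\zeta),
\end{equation}
so the claimed identity is at bottom the convective relation $dP_t=\p_n P_t\,v_n\,dt$. It then remains to substitute the explicit kernels from~\eqref{H-w}, namely $H_t(\infty,s)=|w_t'(s)|$ and $H_t(\zeta_n,s)=|w_t'(s)|\,\Re\frac{e^{i\phi}+\xi_n}{e^{i\phi}-\xi_n}$ with $e^{i\phi}=w_t(s)$, together with the normal velocity read off from~\eqref{v-rho} and~\eqref{rho-slg}, $v_n(s)=|z_t'(e^{i\phi})|\rho_t(e^{i\phi})=\nu|w_t'(s)|\bigl(\sigma-\tfrac12\sum_n\alpha_n\Re\frac{e^{i\phi}+\xi_n}{e^{i\phi}-\xi_n}\bigr)$. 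Both $\p_n P_t(s,\bm\zeta)$ and $v_n(s)$ factor as $|w_t'(s)|$ times the same bracket $\sigma-\tfrac12\sum\alpha_n\Re\frac{e^{i\phi}+\xi_n}{e^{i\phi}-\xi_n}$; their product is $\nu|w_t'(s)|^2$ times the square of this bracket, and since $dq(t)=\nu|w_t'(s)|^2\,dt$ this is precisely the product of the two factors in~\eqref{dP}, with the two dummy indices $n$ and $m$.

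The main obstacle I anticipate is the interchange of the limit $z\to s$ with Hadamard's formula: one must justify that the harmonic measure $H_t(z,\cdot)\,|d\cdot|$ concentrates to a point mass at $s$ while the smooth factor $H_t(z_*,\cdot)$, for an interior singularity $z_*$, passes to its boundary value, and simultaneously confirm that the source-motion contributions, including the Itô terms, genuinely drop out rather than leaving finite boundary remainders. Both facts hinge on the identity $G_t(s,\cdot)\equiv 0$ on $\p D_t$ and on the regularity of the kernels off the diagonal, so I expect them to be controllable, the only genuine care being the tracking of the $2\pi$ normalization of $H_t$.
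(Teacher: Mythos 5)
Your proposal is correct and follows essentially the same route as the paper's proof: Hadamard's formula~\eqref{Hvf}, identification of the normal derivative of the Green's function with the Poisson kernel~\eqref{H-w}, collapse of the contour integral as the observation point reaches the boundary, and substitution of the explicit kernels together with $v_n$ from~\eqref{v-rho} and~\eqref{rho-slg}; your explicit verification that the source-motion (It\^o) terms drop out because $G_t(s,\cdot)\equiv 0$ for $s\in\p D_t$ is a point the paper leaves implicit, and is a welcome addition. One slip: differentiating~\eqref{G-def} along the inward normal gives $\p_n G_t(z,s)=-H_t(z,s)$, as in the paper's eq.~\eqref{G-H}, not $+H_t(z,s)$. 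This is harmless for your main computation, since $\p_n G$ enters Hadamard's formula quadratically, but it reverses the sign in your convective interpretation, which with the paper's conventions should read $dP_t=-\p_n P_t\,v_n\,dt=v_n^2\,dt$, consistent with the Darcy law $v_n=-\p_n P_t$ and with the right-hand side of~\eqref{dP} being a positive square.
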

\begin{proof}
Let us consider the variation of the Green's function with respect to stochastic Loewner flow with density~\eqref{rho-slg}. By taking into account eq.~\eqref{G-def}, one obtains the normal derivative, $\p_n G=2\Re(nG')$, of the Green's function:
\begin{equation}\label{G-H}
	\p_n G_t(\zeta,s)=%H_{\mathbb D^+}(\xi,e^{i\phi})%=
	%-|w_t'(s)|\Re\frac{e^{i\phi}+\xi}{e^{i\phi}-\xi}
	-H_t(\zeta,s)
	,
\end{equation}
where $H_t$ is the Poisson kernel in $D_t$, which can be written explicitly in terms of the uniformization map~\eqref{H-w}. Because of eq.~\eqref{v-rho}, Hadamard's variational formula~\eqref{Hvf} for the Green's function in $D_t$ takes the form:
\begin{equation}\label{dG}
	d G_t(\zeta,s)=\Re\frac{e^{i\phi}+\xi}{e^{i\phi}-\xi}\rho_t(e^{i\phi})dt,
\end{equation}
Hence, the variation of pressure~\eqref{P-def} near the interface $\p D_t$ with respect to stochastic LG with density~\eqref{rho-slg} is given by eq.~\eqref{dP}.

\end{proof}

\section{Martingales of stochastic Laplacian growth}\label{s:martingales}

\subsection{Preliminaries} In this section we propose a family of local martingales closely connected to the stochastic LG problem. Roughly speaking, the martingale, $M_t$, is a random process, for which the conditional expectation value at the next time instant is equal to the present value, $\mathbb E[M_{t_{n+1}}|M_{t_n},\dotsc, M_{t_1}]=M_{t_n}$. Equivalently, one can show that martingales satisfy stochastic differential equations without drift terms, $d M_t=f_t(M_t)dB_t$.

The martingales turns out to be essential objects to study geometrical properties of various two-dimensional fractal sets generated by Loewner chains. Although some martingales for SLE can be obtained explicitly by means of stochastic calculus, the methods of conformal field theory provide a conventional framework to study wide classes of martingales. The connection between SLEs and CFTs relies on the interpretation of SLE curves as the level lines of Gaussian free field, which, in turn, emerges as a continuum description of certain discrete statistical systems. These two-dimensional systems at critical points\footnote{At the critical point the correlation length of the system diverges.} are described by conformal field theories.

Although we will not use CFT methods below, the conformal field theory provides a clue to martingales of stochastic LG. Indeed, let us consider a boundary CFT in $D_t$, which is characterized by the set of scaling operators (primary fields) $\Phi_{h,\bar h}(z,\bar z)$ constituting representations of the Virasoro algebra. The set of conformal dimensions, $(h,\bar h)$, specifies the transformation of the correlation functions with respect to conformal mappings, e.g.,
\begin{equation}
	\langle\prod_{i}\Phi_{h_i,\bar h_i}(z_i,\bar z_i)\rangle_{D_t}=\prod_i (w'_t(z_i))^{h_i}(\bar w'_t(\bar z_i))^{\bar h_i}\langle\prod_{i}\Phi_{h_i,\bar h_i}(w_t(z_i),\bar w_t(\bar z_i))\rangle_{\mathbb D},
\end{equation}
where $w_t:D_t\to\mathbb D$. In the chiral CFT on $\mathbb C$ the transformation law of correlation functions completely factorizes into the variation with respect to the holomorphic coordinates $z_i$, and the variation with respect to the antiholomorphic coordinates $\bar z_i$. Thus, one can consider the coordinates $z$ and $\bar z$ as formally independent.

The holomorphic and antiholomorphic sectors of CFT are no longer independent on the manifolds with boundaries, because the permissible conformal transformations of both sectors must obey certain boundary conditions. For the most conventional domains (either the unit disk or the upper half-plane) the non-chiral $N$-point correlation functions can be obtained by considering chiral $2N$-point function on $\mathbb C$ restricted by certain conditions along the unit circle of the real line. The procedure of extending the domain of the definition of the boundary CFT naturally leads to the Schottky double construction~\cite{SS54}. A possible application of the Schottky double for studying deterministic LG problem can be found in Refs.~\cite{Gustafsson83,Krichever04}. The procedure of doubling of $D_t$ together with the analysis of Ref.~\cite{Ale19b} implies that the martingales of stochastic LG should be the functions of random process on the Schottky double of $D_t$.  Below, we briefly recall some basic notions of the Schottky double construction.

The Schottky double $\hat\Sigma$ of a planar domain with boundary is a compact Riemann surface without boundary endowed with an antiholomorpic involution $j$ on $\hat\Sigma$. In order to make a Schottky double of the bounded domain $D_t$, we take a copy $D_t^*$ of $D_t$, attach $D_t^*$ to $D_t$ along the boundary $\p D_t$, and add the points at both infinities ($\infty,\infty^*$), so that a compact Riemann surface $\hat\Sigma$ is obtained:
\begin{equation}
	\hat\Sigma= D_t\cup\p D_t\cup D_t^*,
\end{equation}
Note, that the Schottky double $\hat\Sigma$ of the disk $D_t$ is a Riemann sphere. If $z\in D_t$ is a holomorphic coordinate on $D_t$, let $z^*\in D^*_t$ denotes the corresponding point on $D^*_t$. The involution $j$ on $\Sigma$ is defined by $j(z)=z^*$, $j(z^*)=z$, and $j(z)=z$ for $z\in \p D_t$. The conformal structure on $D_t$ can be extended across $\p D_t$ to a conformal structure on all of $\Sigma$. The conformal structure on $D^*_t$ is the opposite to that on $D_t$. Hence, the function $z^*\to\bar z$ can be considered as a local coordinate on $D_t^*$.

\begin{figure}[t]
\centering
\includegraphics[width=.35\columnwidth]{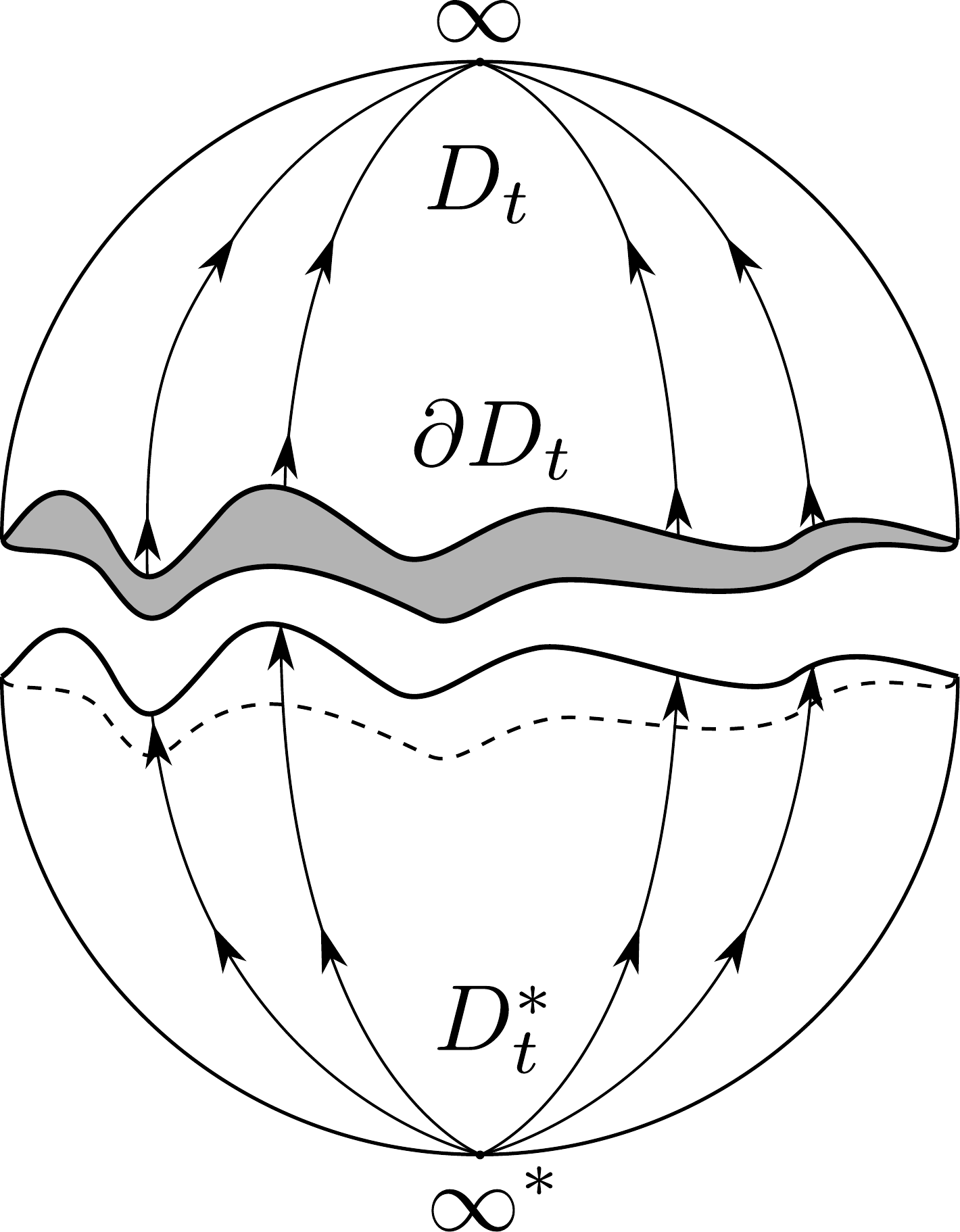}
\caption{\label{double}
The two sides $D_t$ and $D_t^*$ of the Schottky double $\hat\Sigma$ (with the points $\infty$ and $\infty^*$ added) are glued together along the boundary $\p D_t$. The points at the boundary are invariant with respect to the antiholomorphic involution. By the arrows with lines we show schematically the vector field of conformal transformations on the Schottky double, generated by the Loewner flow on $D_t$ and its continuation on $D^*_t$.
}
\end{figure}

\begin{remark}\label{r:barw}
Below, we use the notations $z^*$ and $\bar z$, where $\bar z$ always refers to the complex
conjugate of $z$, while $z^*$ is treated as an independent coordinate of the antiholomorphic side of the Schotky double, which will eventually be set to $\bar z$.
\end{remark}

In order to study martingales of stochastic LG it is convenient to consider the inverse conformal map $w_t: D_t\to\mathbb D$, which satisfies eq.~\eqref{w_eq}. Correspondingly, we will consider the Schottky double $\Sigma=\mathbb D\cup \p \mathbb D\cup\mathbb D^*$, i.e., the Riemann surface obtained by gluing the exterior of the disk $\mathbb D$ and its copy $\mathbb D^*$ along the boundary $\p\mathbb D$. The Loewner-Kufarev equation~\eqref{w_eq} for the chain $(w_t)_{t\geq0}$ generates the infinitesimal conformal transformation $w_t\to w_{t+dt}=w_t+\epsilon(w_t)$ of $\mathbb D$, where $\epsilon(w_t)=w_t p_t(w_t)dt$. Since $\Re p(w_t)<0$ at $w_t\in\p \mathbb D$, the unit circle contracts toward the origin with respect to the infinitesimal deformation $\epsilon(w_t)$. Similarly, one can say that the time evolution of the contour on the plane can be considered as a certain dynamics of the curve $\p\mathbb D$ on the double $\Sigma$.

Let us continue the Loewner flow $\epsilon(w_t)$ from $\mathbb D$ to the antiholomorphic side, $\mathbb D^*$, of the double $\Sigma$ by requiring the flow to be smooth upon crossing $\p \mathbb D$. By taking into account that the function $p_t(w)$ has a jump across the unit circle~\eqref{p-limits}, and noting that $\epsilon(w)/w=\bar w \epsilon(w)$ when $w\bar w=1$ and $p_t(1/\bar w)=-\bar p_t (\bar w)$, one concludes that the Loewner flow on $\mathbb D^*$ in the vicinity of $\p\mathbb D$ is $\varepsilon(w^*)=-\bar \epsilon(w^*)$. Thus, the Loewner flows on both side of the Schottky double $\Sigma$ read:
\begin{equation}\label{barw_dt}
	d w_t(z)=w_t(z) p_t(w_t(z))dt,\qquad d w^*_t(z^*)=- w^*_t(z^*)p_t^*(w^*_t(z^*))dt,
\end{equation}
where $p_t(w)$ is defined by eq.~\eqref{p-def}, and the function $p_t^*(w^*)$ can be written in terms of the following contour integral:
\begin{equation}
	p_t^*(w^*)= \int_0^{2\pi}\frac{d\phi}{2\pi}\frac{e^{- i\phi}+w^*}{e^{-i\phi}-w^*}\rho_t(e^{i \phi}).
\end{equation}
The Loewner density for stochastic LG is given by eq.~\eqref{rho-slg}, where the real part of the analytic function can be replaced by $2\Re f(w) = f(w)+f^*(w^*)$. Indeed, at the time instant $t$, when the two sides of the Schottky double are glued along the unit circle, the involution between the sides identifies $\xi^*$ with the complex conjugated $\bar\xi$. 

Besides, in addition to the driving processes~\eqref{dxi} (where the complex conjugated coordinates, $\bar \xi_n$, should be replaced by the coordinates, $\xi^*_n$, on the antiholomorphic side $\mathbb D^*$ of the double), the stochastic random processes $\xi^*_n\in\mathbb D^*$, $n=1,2,\dotsc,N$, should be specified. In order to be compatible with the flow~\eqref{barw_dt} on $\Sigma$ and eq.~\eqref{dxi}, the processes $\xi_n^*$ are expected to be semimartingales, which satisfy the following set of coupled differential equations:
\begin{equation}\label{dbarxi}
		 \frac{d\xi_n^*}{\xi_n^*}=  \sigma dq_n - g_n(\bm{\xi},\bm{\xi}^*)dq_n - i d W,\qquad n=1,2,\dotsc,N,
\end{equation}
where the function $g_n(\bm{\xi},\bm{\xi}^*)$ is given by eq.~\eqref{g-def}, provided that the complex conjugated coordinates $\{\bar\xi_n\}_{n=1}^N$ are replaced by $\{\xi^*_n\}_{n=1}^N$ on the antiholomorphic side of the double $\mathbb D^*$.

Below, we will propose a family of functions on $\Sigma$, which are the martingales with respect to the stochastic Loewner flow~\eqref{barw_dt} on the Schottky double $\Sigma$, driven by the random processes~\eqref{dxi},~\eqref{dbarxi}.

\subsection{Exponential martingales on the Schottky double.}
In this section we determine a family of martingales on the Schottky double with respect to the Loewner flows~\eqref{barw_dt}. Let us proof auxiliary results first. In Appendix~\ref{a:patterns} we recall the basic features of Loewner chains generated in stochastic LG (see Refs.~\cite{Ale19a,Ale19b} for details). In particular, on can argue, that in the discrete time approximation the Loewner chain $(z_t)_{t\geq0}$ can be written as a finite sum of logarithmic terms (see eq.~\eqref{a:zI} below). Then, we can prove the following Lemma.

\begin{lem}\label{l:w'}
Let $\delta_{n}(t)=\dist(\xi_{n}^0(t),\p \mathbb D)$ be the the distances between the poles, $\xi_n^0(t)$, of $z'_t(w)$ and $\p \mathbb D$ at time instant $t$. Let us define the following set of points
\begin{equation}\label{S-def}
	S=\{z\in D_t|\forall z:\exists \delta_{n}(t)\ dist(w_t(z),\p\mathbb D)<\delta_{n}(t),\ \p_\tau|w_t'(z)|=0\},
\end{equation}
where $\p_\tau f =\Im[n(z)\p_z f]$, $n(z_t(w))=w z'_t(w)/|z'_t(w)|$. Suppose that $s\in S$ and consider the logarithmic function of conformal factors, $\log (w_t'(s)w_t'^*(s^*))$, where $s^*=j(s)\in \mathbb D^*$. Then, the variation of the logarithmic function with respect to the infinitesimal conformal transformations~\eqref{barw_dt} of the Schottky double reads
\begin{equation}\label{log_ww*}
	d\log \left(w_t' w^{*\prime}_t\right) =d\log (w_t w^*_t)-2\sum_{n=1}^N \alpha_n\left[ \frac{w\xi_n}{(w-\xi_n)^2}-\frac{ w^* \xi^*_n}{(w^*-\xi^*_n)^2}\right]dq.
\end{equation}
The auxiliary time, $q$, is related with the time, $t$, by $dq=\nu|w'_t(s)|^2 dt$.
\end{lem}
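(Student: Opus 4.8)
The plan is to compute the stochastic differential $d\log(w_t' w_t'^\prime)$ directly from the two Loewner flows in~\eqref{barw_dt} and then specialize the density to the stochastic LG form~\eqref{rho-slg}. First I would differentiate the Loewner equation~\eqref{w_eq} with respect to the spatial coordinate $z$ to obtain an equation for $w_t'(z)$. Writing $\dot w_t = p_t(w_t)w_t$ and applying $\p_z$, the chain rule gives $\dot w_t' = \bigl[p_t(w_t) + w_t p_t'(w_t)\bigr]w_t'$, so that $d\log w_t' = \bigl[p_t(w) + w p_t'(w)\bigr]dt$, where $w = w_t(z)$. The term $p_t(w)\,dt$ is exactly $d\log w_t$ from the original flow, which accounts for the $d\log w_t$ piece of the claimed formula; the remaining contribution is $w p_t'(w)\,dt$. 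I would carry out the mirror computation on the antiholomorphic side using $\dot w_t^* = -p_t^*(w^*)w^*$, yielding $d\log w_t'^\prime = \bigl[-p_t^*(w^*) - w^* p_t^{*\prime}(w^*)\bigr]dt$, whose first term assembles with the holomorphic one into $d\log(w_t w_t^*)$.

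The core of the argument is then to evaluate $w p_t'(w) - w^* p_t^{*\prime}(w^*)$ for the stochastic LG density. Here I would substitute~\eqref{rho-slg} into the integral representation~\eqref{p-def} of $p_t$. Using the Schottky-double prescription noted in the text that $2\Re f(w) = f(w) + f^*(w^*)$, the kernel $\Re\frac{e^{i\phi}+\xi_n}{e^{i\phi}-\xi_n}$ in~\eqref{rho-slg} splits into a holomorphic part generating a function with simple poles at $w=\xi_n$ and an antiholomorphic part with poles at $w^*=\xi_n^*$. The key observation is that the factor $|z_t'(e^{i\phi})|^{-2}$ in~\eqref{rho-slg} is precisely what converts $dt$ into $dq = \nu|w_t'(s)|^2\,dt$ upon integration, so that after the Poisson-integral is performed the analytic piece of $w p_t'(w)$ produces terms of the form $-\alpha_n\, w\xi_n/(w-\xi_n)^2$, which is $w\,\p_w$ of the simple-pole kernel; the antiholomorphic side contributes the conjugate structure with the opposite sign from the minus sign in~\eqref{barw_dt}. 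Collecting these gives exactly the sum $-2\sum_n \alpha_n\bigl[\tfrac{w\xi_n}{(w-\xi_n)^2} - \tfrac{w^*\xi_n^*}{(w^*-\xi_n^*)^2}\bigr]dq$.

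The role of the set $S$ in~\eqref{S-def} and the condition $s\in S$ is where I expect the main subtlety. The evaluation of the Poisson integral that turns $dt$ into $dq$ relies on the normal-velocity identity~\eqref{v-rho} and on being able to identify $dq = \nu|w_t'(s)|^2\,dt$ with a well-defined conformal factor; the constraint $\p_\tau|w_t'(z)|=0$ (tangential derivative of the modulus of the derivative vanishes) together with $s$ lying closer to $\p\mathbb D$ than the nearest pole of $z_t'$ is what guarantees that $|w_t'(s)|$ is locally regular and that the contributions from the singular structure of $p_t'$ near the boundary are controlled. The hard part will be justifying rigorously that on $S$ the boundary integral in the Poisson representation localizes so as to reproduce precisely the stated rational expressions without leftover terms from the $\sigma$ background or from the tangential variation of the conformal factor — in particular one must check that the $\sigma$-proportional part of the density contributes only to the $d\log(w_t w_t^*)$ term and not to the finite-pole sum. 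Once the localization on $S$ is established, the remaining steps are the routine residue bookkeeping sketched above.
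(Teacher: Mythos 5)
Your outline follows the same route as the paper: differentiate the two flows in~\eqref{barw_dt} to get $d\log (w_t' w^{*\prime}_t)=\left[p(w_t)-p^*(w^*_t)+w_tp'(w_t)-w^*_tp^{*\prime}(w^*_t)\right]dt$, identify the first difference with $d\log (w_t w^*_t)$, and evaluate the second difference by analytically continuing $\rho_t$ off the unit circle and collecting residues. The skeleton is right, but you defer precisely the step that constitutes the proof, and you misattribute the role of the two conditions defining $S$. In the paper, for $s\in S$ the difference $w_tp'-w^*_tp^{*\prime}$ collapses, by pinching the two contours at the boundary point $w_t(s)$, to the single integral $I(w)=-2w\oint_w\frac{du}{2\pi i}\frac{\rho_t(u)}{(u-w)^2}$; the first condition (that $w_t(s)$ is closer to $\p\mathbb D$ than the poles $\xi_n^0$ of $z_t'$) ensures that only the residue at $u=w$ contributes. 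Because $\rho_t$ in~\eqref{rho-slg} is a product of the Jacobian factor $|z_t'|^{-2}$ and the bracket, this residue splits into two pieces: differentiating the bracket gives the stated pole sum times $dq$ (your ``routine residue bookkeeping''), while differentiating the Jacobian gives an extra term $I_1\propto \rho_t(w_t(s))\,|w_t'(s)|^2\,\p_\tau|w_t'(s)|$, which is \emph{generically nonzero}. It vanishes only because the second defining condition of $S$ is $\p_\tau|w_t'(s)|=0$: the point must be an extremum of the harmonic-measure density. This is an exact cancellation of a specific residue contribution, not a matter of ``local regularity'' or of the singular structure being ``controlled,'' as you put it.

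So the gap is concrete: you acknowledge as ``the hard part'' the check that no leftover terms arise from the $\sigma$ background or from the tangential variation of the conformal factor, but you never supply the mechanism, and without it the claimed formula~\eqref{log_ww*} is simply false at generic boundary points --- there would be an additional term proportional to $\p_\tau|w_t'(s)|$. (Your worry about the $\sigma$ term leaking into the pole sum resolves the same way: differentiation of the bracket annihilates the constant $\sigma$, so $\sigma$ enters the residue only through the Jacobian-derivative term $I_1$, and disappears with it on $S$.) To complete your argument you would need (a) the contour-pinching identity reducing $w_tp'-w^*_tp^{*\prime}$ to $I(w)$, justified by the exponential approach of $w_t(s)$ to $\p\mathbb D$ for $s\in S$, and (b) the explicit computation showing $I_1\propto\p_\tau|w_t'(s)|$, hence $I_1=0$ on $S$. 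Both are done in the paper's proof; neither is present in your proposal.
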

\begin{proof}
By differentiating eqs.~\eqref{barw_dt} with respect to $s$ and $s^*$, one obtains the following variation of the logarithms of the conformal factors:
\begin{equation}
	\frac{d\log \left(w_t' w^{*\prime}_t\right)}{dt}=p(w_t) - p^*( w^*_t) + w_tp'(w_t)-w^*_t p^{*\prime}(w_t^*).
\end{equation}

Let us consider the set $S$ defined in~\eqref{S-def}. In the previous works~\cite{Ale19a,Ale19b} and Appendix~\ref{a:patterns} we argued, that the poles, $\xi_n^0$, of $z'_t(w)$ tend to $\p \mathbb D$ following an exponential law in the long time asymptotic, namely, $\dist(\xi_n^0(t),\p \mathbb D)\propto \exp(-r_t/|c_{n,0}|)$, where $r_t$ is the conformal radius, and $c_{n,0}$ is a constant (see eq.~\eqref{a:c-def} below). At the same time, from the results of Ref.~\cite{DMW98} and a similarity of stochastic LG and the deterministic LG with stochastically moving sources it follows, that $\dist(z_t(1/\bar \xi_n^0(t)),\p D_t)= c_{n,0}+O(1/r_t)$, where $z_t(1/\bar \xi_n^0)$ is the tip of the fjord. Hence, for the points $s$ from the line segment $s\in[z_t(e^{i \arg \xi_n^0}),z_t(1/\bar \xi_n^0)]$ we have $\dist(w_t(z),\p \mathbb D)<\dist(\xi_n^0(z),\p \mathbb D)$. Besides, the harmonic measure reaches local minima, $\p_\tau |w'_t(z)|=0$ (where $\p_\tau$ is the tangential derivative along the interface), at the bottoms of the fjords near the points $z=z_t(e^{i \arg \xi_n^0})$. Therefore, the set $S$ is not empty.

Let us consider the point $s\in S$, so that $w_t(s)$ tends to $\p \mathbb D$ following an exponential law as $r_t\to \infty$. In this case, the difference between the integrals, $ w_tp'-w^*_t p^{*\prime}$, reduces to a single contour integral around $w_t(s)$ (which can be considered as the boundary point $w_t(s)\in\p\mathbb D$):
\begin{equation}\label{dlog_w}
	\frac{d\log \left(w_t' w^{*\prime}_t\right)}{dt}=p(w_t) - p^*( w^*_t) + I(w),
\end{equation}
where
\begin{equation}\label{I_comp}
	I(w) = -2 w \oint_w\frac{du}{2\pi i}\frac{\rho_t(u)}{(u-w)^2},
\end{equation}
Note, that the density $\rho_t(u)$ depends both on the holomorphic, and anti-holomorphic coordinates~\eqref{rho-slg}. However, at the unit circle one has $\bar u=1/u$, and, therefore, $\rho_t(u)$ can be analytically continued away from the unit circle unless one encounters a critical point of $z'_t(w)$. Therefore, on can use residue theorem to evaluate the integral $I(w)$.

Since $s\in S$ , and $w_t(s)$ is closer to $\p\mathbb D$ than any critical point of $z'_t(u)\bar z'_t(1/u)$, only the residue at $w$ contributes to the contour integral~\eqref{I_comp}. Let us consider the contributions, $I=I_1+I_2$, which appear upon differentiating the factor $|z_t'(w)|^{-2}$, and the expression in the right hand side of eq.~\eqref{rho-slg} correspondingly,
\begin{subequations}
	\begin{gather}
		\label{I1}
	I_1(w) = 4i w\rho_t(w)\Im\left[w\p_w (z'_t(w)z_t'^*(w^*))\right],\\
	\label{I2}
	I_2(w) = -\frac{\nu}{|z'_t(w)|^2}\sum_{n=1}^N \alpha_n\left[\frac{w\xi_n}{(w-\xi_n)^2}-\frac{ w^* \xi^*_n}{(w^* - \xi^*_n)^2}\right],
	\end{gather}
\end{subequations}
where $\bar z_t'(w^{-1})$ denotes $d\bar z_t(u)/du$ at the point $u=w^{-1}$, and we took into account that $w^{-1}=\bar w$ on the unit circle. The expression in parenthesis in the right hand side of eq.~\eqref{I1} can be recast in the form $-2|w'_t(s)|^{-4}\Im\left(n(s)\p_s|w_t'(s)|\right)$, where $n(z_t(w))=w z'_t(w)/|z'_t(w)|$ is the unit normal vector at $s\in \p D_t$. Therefore, one obtains
\begin{equation}\label{I1-calc}
	I_1(w_t(s))=-8 i w_t(z) \frac{\rho_t(w_t(s))}{|z_t'(w)|^2}\p_\tau|w_t'(z)|,
\end{equation}

Hence, the contribution~\eqref{I1-calc} to the integral $I(w)$ vanishes provided that $s\in S$ is a local extremum of the harmonic measure. In this case the value of the integral $I(w)$ is given by~\eqref{I2} solely. Finally, because of eqs.~\eqref{barw_dt} one can show that the difference $p(w_t)-p^*(w_t^*)$ is equal to the variation of $\log (w_t w_t^*)$.

\end{proof}

In the following Lemma we introduce a certain function of the random processes, $\xi$ and $\xi_n^*$, on the Schottky double $\Sigma$, and study its variation with respect to the Loewner flow.

\begin{lem}\label{l:h}
Let us introduce the auxiliary logarithmic functions,
\begin{equation}
\begin{gathered}
	\mathfrak g^{(1)}(w,w^*,\bm{\xi})=\sum_{n=1}^N \alpha_n\log (w-\xi_n)(1- w^* \xi_n)-\sum_{n=1}^N \alpha_n\log \xi_n,\\
	\mathfrak g^{(2)}(w ,w^*, \bm{\xi}^*)=\sum_{n=1}^N \alpha_n\log (w^* -\xi_n^*)(1-w \xi_n^*)-\sum_{n=1}^N\alpha_n\log \xi^*_n,
\end{gathered}
\end{equation}
of the coordinates $(\xi,w:=w_t(s))\in\mathbb D$ and $(\xi^*,w^*: =w^*_t(s^*))\in{\mathbb D}^*$, where $(w_t)_{t>0}$, $(w^*_t)_{t>0}$ are stochastic Loewner chains~\eqref{barw_dt} generated by the random processes~\eqref{dxi},~\eqref{dbarxi}. By $\mathfrak h$ we denote the following linear combination of the functions $\mathfrak g^{(1)}$, $\mathfrak g^{(2)}$, and the logarithm of conformal factors, $\log \left(w_t'w_t'^*\right)$:
\begin{equation}\label{h-def}
	\mathfrak h(w, w^*;\bm{\xi},\bm{\xi}^*)=\frac{1}{\sqrt\kappa}\mathfrak g^{(1)}(w,w^*,\bm{\xi})-\frac{1}{\sqrt\kappa}\mathfrak g^{(2)}(w,w^*,\bm{\xi}^*)+\frac{\sqrt\kappa}{4}\log \frac{w' w'^*}{w w^*}.
\end{equation}

Suppose that $s\in S$ is a point from the set $S$ introduced in Lemma~\eqref{l:w'}. Then, the variation of $\mathfrak h$ with respect to the infinitesimal deformation generated by stochastic LG with density~\eqref{rho-slg} reads
\begin{multline}\label{dh}
	d\mathfrak h = \frac{1}{\sqrt\kappa}\sum_{n=1}^N \alpha_n\Re\left(\frac{w+\xi_n}{w-\xi_n}\right)\left[\left(-2\sigma+\sum_{m=1}^N\alpha_m \Re\frac{w+\xi_m}{w-\xi_m}\right)dq-\right.\\
	\left.\phantom{\sum_{m=1}^N} + 2 (\sigma - g_n(\bm{\xi},\bar{\bm{\xi}}))dq_n\right] - \frac{2i}{\sqrt\kappa}\sum_{n=1}^N \alpha_n\Re\left(\frac{w+\xi_n}{w-\xi_n}\right) dW,
\end{multline}
where $dq=\nu|w'(s)|^2 dt$, $dq_n=\nu|w'(\zeta_n^0)|^2dt$, and $2\Re f(w)=f(w)+f^*(w^*)$.
\end{lem}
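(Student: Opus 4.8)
The plan is to apply It\^o's formula to $\mathfrak h$ regarded as a function of the four \emph{independent} coordinates $w,w^*,\xi_n,\xi_n^*$ on the double (Remark~\ref{r:barw}), driven by $dw=wp_t(w)\,dt$, $dw^*=-w^*p_t^*(w^*)\,dt$ and by~\eqref{dxi},~\eqref{dbarxi}, and to impose the gluing relations $w^*=\bar w=1/w$, $\xi_n^*=\bar\xi_n$, $p_t^*=\bar p_t$ only at the very end, writing the result through the shorthand $2\Re f(w)=f(w)+f^*(w^*)$. I would treat the three constituents of~\eqref{h-def} separately and sort the resulting one-form by the differentials $dW$, $dq_n$, $dq$; the conformal-factor piece $\tfrac{\sqrt\kappa}4\,d\log\frac{w'w'^*}{ww^*}$ is read off directly from Lemma~\ref{l:w'}, eq.~\eqref{log_ww*}. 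Throughout, the only quadratic variation is $(d\xi_n)^2=-\tfrac\kappa2\xi_n^2\,dq$ and $(d\xi_n^*)^2=-\tfrac\kappa2\xi_n^{*2}\,dq$ (the same $W$ drives every process, but $\mathfrak g^{(1)},\mathfrak g^{(2)}$ contain no mixed $\xi_n\xi_m^*$ dependence, so no cross terms survive).

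The three ``easy'' parts I expect to fall out mechanically. Writing $\partial_{\xi_n}\mathfrak g^{(1)}\cdot\xi_n=\alpha_n\big[-\tfrac{w}{w-\xi_n}-\tfrac{w^*\xi_n}{1-w^*\xi_n}\big]$, the noise terms $\tfrac1{\sqrt\kappa}\partial_{\xi_n}\mathfrak g^{(1)}\,\xi_n\,idW$ and $-\tfrac1{\sqrt\kappa}\partial_{\xi_n^*}\mathfrak g^{(2)}(-\xi_n^*\,idW)$ combine, after gluing, into the martingale term $-\tfrac{2i}{\sqrt\kappa}\sum_n\alpha_n\Re\tfrac{w+\xi_n}{w-\xi_n}\,dW$; the crucial simplification is the identity $\tfrac{w}{w-\xi_n}-\tfrac{\xi_n}{w-\xi_n}=1$, which forces the seemingly imaginary cross-pieces to reassemble into $\Re\tfrac{w+\xi_n}{w-\xi_n}$. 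The same derivatives, paired now with the drifts $\xi_n(-\sigma+g_n)\,dq_n$ and $\xi_n^*(\sigma-g_n)\,dq_n$, give the $dq_n$-term $\tfrac2{\sqrt\kappa}\sum_n\alpha_n\Re\tfrac{w+\xi_n}{w-\xi_n}(\sigma-g_n)\,dq_n$. Finally, the convective pieces $\partial_w\mathfrak g\,dw+\partial_{w^*}\mathfrak g\,dw^*$ collapse, with $A=\tfrac{w}{w-\xi_n}$, $B=\tfrac{\xi_n}{w-\xi_n}$, to $\tfrac2{\sqrt\kappa}\sum_n\alpha_n\Re\big[p_t(w)(A+\bar B)\big]\,dt$; since $A-B=1$ makes $A+\bar B=\Re\tfrac{w+\xi_n}{w-\xi_n}$ real, only $\Re p_t=-\rho_t$ survives, and substituting $\rho_t\,dt=\big[\sigma-\tfrac12\sum_m\alpha_m\Re\tfrac{w+\xi_m}{w-\xi_m}\big]\,dq$ via $dq=\nu|w_t'(s)|^2dt$ reproduces the full $dq$-term of~\eqref{dh}, the quadratic cross-sum arising from the virtual-source part of~\eqref{rho-slg}. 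Here the hypothesis $s\in S$ is used precisely to treat $w=w_t(s)$ as a boundary point where $\Re p_t=-\rho_t$.

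The hard part will be the leftover $O(\sqrt\kappa)$ terms. The It\^o second-order corrections $\tfrac12\partial_{\xi_n}^2\mathfrak g^{(1)}(d\xi_n)^2$ and $\tfrac12\partial_{\xi_n^*}^2\mathfrak g^{(2)}(d\xi_n^*)^2$ produce double-pole contributions $\sim\tfrac{\xi_n^2}{(w-\xi_n)^2}\,dq$, while Lemma~\ref{l:w'} contributes double-pole terms $\sim\tfrac{w\xi_n}{(w-\xi_n)^2}\,dq$; both carry the weight $\sqrt\kappa$ and neither appears in~\eqref{dh}, so they must cancel. Using the gluing relations, the algebraic identity $\tfrac{\xi_n^2-w\xi_n}{(w-\xi_n)^2}=-\tfrac{\xi_n}{w-\xi_n}$, and $\Im\tfrac{\xi_n}{w-\xi_n}=-\partial_\tau\log|w-\xi_n|$, their sum reduces to a single total tangential derivative $\propto i\sqrt\kappa\,\partial_\tau\log\prod_m|w-\xi_m|^{\alpha_m}\,dq$ along $\partial\mathbb D$. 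The decisive step is then to show this tangential derivative vanishes at $s\in S$: this is the exact analogue of the vanishing of the term $I_1$ in the proof of Lemma~\ref{l:w'}, and it rests on the fact that in the long-time asymptotic, where the conformal factor is governed by the pole structure near $\partial\mathbb D$, the defining condition $\partial_\tau|w_t'|=0$ of the set $S$ also kills $\partial_\tau\log\prod_m|w-\xi_m|^{\alpha_m}$. Verifying carefully that no spurious $\sqrt\kappa$ drift survives on $S$ is the delicate point of the argument, and it is exactly the balancing of the $\tfrac1{\sqrt\kappa}$ and $\tfrac{\sqrt\kappa}4$ weights in~\eqref{h-def} that will underlie the martingale property established later for the distinguished value of $\kappa$.
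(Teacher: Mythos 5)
You follow the same route as the paper's own proof: It\^o's formula applied to $\mathfrak g^{(1)}$ and $\mathfrak g^{(2)}$ with the drivers~\eqref{dxi},~\eqref{dbarxi},~\eqref{barw_dt}, Lemma~\ref{l:w'} for the conformal-factor piece, and a final sorting by $dW$, $dq_n$, $dq$ after gluing ($w^*=1/w$, $\xi_n^*=\bar\xi_n$, $p_t^*=\bar p_t$). Your mechanical steps are correct and reproduce the paper's intermediate equations~\eqref{dg12}: the identity $\xi_n\p_{\xi_n}\mathfrak g^{(1)}=-\alpha_n(w+\xi_n)/(w-\xi_n)$ after gluing yields the $dW$ and $dq_n$ terms, and the convective pieces collapse onto $\Re p_t=-\rho_t$, which gives the $dq$-term of~\eqref{dh}. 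In one respect you go beyond the paper: its proof stops at~\eqref{dg12} and asserts that~\eqref{dh} ``follows'' from them and Lemma~\ref{l:w'}, never displaying the fate of the $O(\sqrt\kappa)$ second-order poles. You show, correctly, that (taking $dq_n=dq$) the It\^o double poles and the double poles of~\eqref{log_ww*} do \emph{not} cancel identically but leave the residual $i\sqrt\kappa\,\p_\tau\log\prod_m|w-\xi_m|^{\alpha_m}\,dq$, a purely imaginary simple-pole sum $\propto\sqrt\kappa\sum_m\alpha_m\Im[\xi_m/(w-\xi_m)]\,dq$; this is precisely the step the paper leaves unaccounted for (its remark after the lemma even concedes that the double poles do not cancel within~\eqref{dg12} alone).

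The weak point is your mechanism for killing this residual on $S$. Near the bottom of the $n$-th fjord the derivative $z_t'(w)$ is dominated by the single pole at $\xi_n$ (cf.~\eqref{a:zI}), so $\log|w_t'|\approx\log|w-\xi_n|-\log|c_{n}|$ there, and the defining condition $\p_\tau|w_t'(s)|=0$ of the set $S$ therefore kills only the $m=n$ term $\p_\tau\log|w-\xi_n|$ --- the one term that could be as large as $1/\dist(\xi_n,\p\mathbb D)$. The terms with $m\neq n$ come from poles far from $w_t(s)$, are generically $O(1)$, and do not vanish at $s$; your analogy with the vanishing of $I_1$ in Lemma~\ref{l:w'} does not apply to them. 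Consequently the cancellation you invoke holds only asymptotically: the surviving $O(\sqrt\kappa)$ piece is subleading relative to the singular diagonal terms retained in~\eqref{dh} as $\dist(\xi_n,\p\mathbb D)\to0$, but it is not exactly zero, so~\eqref{dh} should be read modulo such corrections (or one must restrict to $N=1$, where your argument does close the gap). Note, however, that this imprecision is inherited from the paper itself, whose proof is silent on these terms altogether; your proposal is, if anything, the more complete of the two.
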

\begin{proof} Let us recall  \^{I}to's lemma for finding the differential  of a time-dependent function of stochastic processes. Suppose that $x_t$ is a random process  that satisfies the stochastic differential equation: $dx_t=\mu_t dt+\sigma_tdB_t$, where $B_t$ is a standard Brownian motion, i.e., $\mathbb E[B_t]=0$, and $\Cov[B_t,B_{t'}] = \min(t,t')$. Then, for any twice-differentiable scalar function $f(t,x_t)$ one has
\begin{equation}\label{Ito}
	df=\left(\p_t f+\mu_t\p_x f+(\sigma_t^2/2)\p_{xx}f\right)dt+\sigma_t(\p_xf)dB_t.
\end{equation}

One can apply \^{I}to's formula~\eqref{Ito} in order to determine the variations of $\mathfrak g^{(1)}$ and $\mathfrak g^{(2)}$ on $\Sigma$ with respect to the infinitesimal conformal transformation generated by Loewner flows~\eqref{barw_dt}:
\begin{equation}\label{dg12}
	\begin{aligned}
	d\mathfrak g^{(1)}=&\sum_{n=1}^N \alpha_n\left[-\frac{w+\xi_n }{w-\xi_n}idW+(\sigma	- g_n(\bm{\xi},\bm{\xi}^*))\frac{w+\xi_n}{w-\xi_n}dq_n\right.+\\
	& \left.+\frac{\kappa}{2}\frac{\xi_n\xi_n}{(w-\xi_n)^2} dq_n + \Re p_t(w)\frac{w+\xi_n}{w-\xi_n}dt + i\Im p_t(w)dt\right],
	\\
	d\mathfrak g^{(2)}=& \sum_{n=1}^N \alpha_n \left[\frac{w^*+\xi^*_n}{w^*-\xi^*_n}i dW-(\sigma-g_n(\bm{\xi},\bm{\xi}^*))\frac{w^*+\xi_n^*}{w^*-\xi_n^*}dq_n + \right.\\
	& \left.+\frac{\kappa}{2}\frac{\xi_n^*\xi_n^*}{(w^*-\xi_n^*)^2}dq_n - \Re p_t(w)\frac{w^*+\xi_n^*}{w^*-\xi_n^*}dt+i  \Im p(w)dt\right],
	\end{aligned}
\end{equation}
Hence, from eqs.~\eqref{dg12} and Lemma~\ref{l:w'} the variation of $\mathfrak h$ follows.
\end{proof}

\begin{remark} Below, it will be assumed that in the long time asymptotic the auxiliary times $q,q_1,q_2,\dotsc,q_N$ can be chosen in such a way that $dq=dq_1=\cdots=d q_N$. Indeed, $dq=\nu|w_t'(s)|^2 dt$, and $s$ can be chosen from the subsets of the interface, $\p D_t^{\alpha}\subset\p D_t$, where the harmonic measure scales with the exponent $\alpha$. Besides, $dq_n=\nu|w_t'(\zeta_n^0)|^2dt$, and one can choose the initial conditions in such a manner, that near the tips of the fjords, $\zeta_n^0$'s, the harmonic measure scales with the exponent $\alpha$ in the long time asymptotic (see Remark~\ref{r:scaling}). Hence, we will drop the subscripts $n$ to $q_n$'s in what follows.
\end{remark}

Note, that the variations~\eqref{dg12} are not symmetric with respect to the interchange $(\xi,w)\to(\xi^*,w^*)$. Moreover, by taking into account that $\Re p_t(w)=-\rho_t(w)$, where $\rho_t(w)$ is given by~\eqref{rho-slg}, one obtains the coefficients, $(\kappa/2-1)$ and $(\kappa/2+1)$, in front of the second order poles, $(w-\xi_n)^{-2}$ and $(w^*-\xi^*_n)^{-2}$, respectively. Thus, these terms do not vanish simultaneously for a specific value of the diffusion coefficient $\kappa$.

Now, we consider the exponential function, $\exp(\gamma \mathfrak h)$, of random processes generated by stochastic Loewner flow~\eqref{barw_dt} on the Schottky double. The following proposition shows, that for a specific value of the coupling parameter, namely, $\gamma=\kappa^{-1/2}$, the random processes are martingales of stochastic Loewner flow on $\Sigma$.

\begin{prop}\label{p:1m}

Let us consider the exponential function of the random process $\mathfrak h$ introduced in Lemma~\ref{l:h}:
\begin{equation}\label{M-def}
	M_t(w;\bm{\xi}):=M_t(w,w^*;\bm{\xi},\bm{\xi}^*)=e^{(1/\sqrt\kappa)\mathfrak h(w,w^*;\bm{\xi},\bm{\xi}^*)},
\end{equation}
where $(w_t)_{t>0}$, $(w^*_t)_{t>0}$ are stochastic Loewner chains~\eqref{barw_dt} generated by the random processes~\eqref{dxi},~\eqref{dbarxi}. The function $M_t(w;\bm{\xi})$ is a local martingale satisfying the following stochastic differential equation:
\begin{equation}\label{dM}
	dM_t(w;\bm{\xi}) = -\frac{2i}{\sqrt\kappa} M_t(w;\bm{\xi})\sum_{n=1}^N \alpha_n\Re\left(\frac{w+\xi_n}{w-\xi_n}\right)dW,
\end{equation}
where $2\Re f(w)=f(w)+f^*(w^*)$.
\end{prop}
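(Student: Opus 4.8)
The plan is to apply Itô's formula to the exponential $M_t=e^{(1/\sqrt\kappa)\mathfrak h}$, feeding in the variation $d\mathfrak h$ already computed in Lemma~\ref{l:h}, and then to verify that for the distinguished coupling $\gamma=1/\sqrt\kappa$ the drift ($dq$) terms cancel identically, leaving a pure noise term. Concretely, for $f=e^{\gamma x}$ the composition rule derived from eq.~\eqref{Ito} gives $dM_t=M_t\bigl(\gamma\,d\mathfrak h+\tfrac12\gamma^2(d\mathfrak h)^2\bigr)$ with $\gamma=1/\sqrt\kappa$, where the quadratic variation $(d\mathfrak h)^2$ is evaluated by Itô's multiplication rules. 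The martingale property is then equivalent to the vanishing of the $dq$-coefficient of $dM_t$.

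First I would split $d\mathfrak h$ from eq.~\eqref{dh} into its drift part (the $dq$ and $dq_n$ terms) and its martingale part $-\tfrac{2i}{\sqrt\kappa}\,\mathcal A\,dW$, abbreviating $\mathcal A=\sum_{n=1}^N\alpha_n\Re\frac{w+\xi_n}{w-\xi_n}$. Since $dq\,dW=0$ and $(dq)^2=0$, only the Brownian part contributes to the quadratic variation; using the covariance $\Cov[W(q),W(q')]=(\kappa/2)\min(q,q')$, i.e.\ $(dW)^2=(\kappa/2)\,dq$, I obtain $(d\mathfrak h)^2=-2\,\mathcal A^2\,dq$.

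The central step is the assembly of the drift of $dM_t$. With the identification $dq=dq_n$ (justified in the Remark following Lemma~\ref{l:h} in the long-time asymptotics) the $\sigma$-dependent pieces of eq.~\eqref{dh} combine as $-2\sigma+2(\sigma-g_n)=-2g_n$, so the bracket collapses to $\mathcal A-2g_n$; invoking the vanishing $g_n=0$ for the specific product $Z_N$ of eq.~\eqref{Z-def} (see the Remark after that equation), the drift of $d\mathfrak h$ reduces to $\tfrac{1}{\sqrt\kappa}\mathcal A^2\,dq$. The total drift of $dM_t/M_t$ is therefore $\bigl(\tfrac{\gamma}{\sqrt\kappa}-\gamma^2\bigr)\mathcal A^2\,dq$, which vanishes precisely at $\gamma=1/\sqrt\kappa$. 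This is exactly why this value of the parameter is singled out: the linear-in-$\gamma$ drift is cancelled by the quadratic-in-$\gamma$ Itô correction. What survives is the noise term alone, yielding the stated equation~\eqref{dM}, and since $M_t$ then solves a driftless equation it is a local martingale.

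The step I expect to be the main obstacle is the drift bookkeeping in this central computation: one must check simultaneously that the $\sigma$-pieces cancel between the $-2\sigma\,dq$ and $2(\sigma-g_n)\,dq_n$ contributions, that the quadratic variation carries the correct factor $\kappa/2$ inherited from the covariance of $W$, and that the residual $g_n$-dependent contribution $-\tfrac{2}{\kappa}\sum_n\alpha_n g_n\Re\frac{w+\xi_n}{w-\xi_n}\,dq$ is annihilated by $g_n=0$. The cancellation is delicate because it hinges on three ingredients at once---the choice $\gamma=1/\sqrt\kappa$, the asymptotic identification $dq=dq_n$, and the non-interacting ansatz $g_n=0$---and the failure of any one of them reintroduces a drift and destroys the martingale property.
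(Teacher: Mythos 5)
Your proposal is correct and follows essentially the same route as the paper: the paper's proof consists precisely of applying the It\^{o} exponential formula~\eqref{Ito-exp} to~\eqref{M-def} with $d\mathfrak h$ taken from Lemma~\ref{l:h}, and you have simply made explicit the drift bookkeeping (the identification $dq=dq_n$, the vanishing $g_n=0$ for the product~\eqref{Z-def}, and the quadratic variation $(dW)^2=(\kappa/2)\,dq$ giving $(d\mathfrak h)^2=-2\mathcal A^2\,dq$) that the paper leaves to the reader, correctly isolating why $\gamma=1/\sqrt\kappa$ is the distinguished value. One small point: carried through literally, your computation gives the surviving noise term $-\tfrac{2i}{\kappa}M_t\mathcal A\,dW$ rather than $-\tfrac{2i}{\sqrt\kappa}M_t\mathcal A\,dW$; this factor-of-$\sqrt\kappa$ mismatch is an inconsistency between~\eqref{dh} and~\eqref{dM} internal to the paper, not a defect of your argument, and it does not affect the driftlessness and hence the local martingale property.
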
 
\begin{proof}
The differential of the exponential function, $\exp (f(t,x_t))$, of the random process $x_t$, that satisfies the stochastic differential equation, $dx_t=\mu_tdt+\sigma_tdB_t$, can be determined by using \^{I}to's Lemma~\eqref{Ito}:
\begin{equation}\label{Ito-exp}
	de^{f}=e^{f}\left[(\p_t f+\mu_t\p_x f+(\sigma_t^2/2)(\p_{xx}f+(\p_x f)^2)dt+\sigma_t(\p_xf) dB_t\right].
\end{equation}
By applying this formula to the exponential function~\eqref{M-def} one obtains eq.~\eqref{dM}.
\end{proof}

In the next section we discuss a connection between a family of exponential martingales, $M_t(w,\bm{\xi})$, of stochastic Loewner flow on the Schottky double $\Sigma$, and the stochastic LG of the interface $\p D_t$. The connection relies by the formula, which connects the Hadamard's variational formula for the Green's function to the covariance of the martingales~\eqref{M-def}.

\subsection{One-point exponential martingales and elementary deformations} First, let us consider a family of the so-called elementary deformations of the domain~\cite{KMZ05}. The elementary deformation of the domain  with the base point $\zeta\in D_t$ is such deformation of $D_t$, that the displacement, $\delta_{\zeta} h(s)$, of the infinitesimal portion of the boundary at $s\in\p D_t$ along the unit normal vector pointing inside $D_t$ is proportional to the harmonic measure of that portion of the boundary as viewed from the point $\zeta$:
\begin{equation}\label{delta-h}
	\delta_{\zeta} h_t(s)=-\frac{\nu \delta t}{2}\p_n G_t(\zeta,s),\qquad s\in\p D_t,\quad \zeta\in D_t,
\end{equation}
where $\nu \delta t=\delta A_t$ is the increment of the area, $A_t$, of the domain, and $\p_n G_t(\zeta,s)$ is the normal derivative of the Dirichlet Green's function with respect to the second variable. The elementary deformation of the domain can be considered as the result of the idealized deterministic LG with the oil sink with rate $\nu$ placed at the point $\zeta\in D_t$ instead of infinity.

Let $t_k$ be the harmonic moments of the domain $D_t$ with respect to the basis $z^{-k}/k$:
\begin{equation}
	t_k=-\frac{1}{\pi k}\int_{D_t}z^{-k}d^2z, \qquad k=1,2,\dotsc
\end{equation}
We also introduce $\bar t_k$ to be the complex conjugated moments, and refer to $t_0=(1/\pi)\int_{D^+_t}d^2z$ as to the area (divided by $\pi$) of the bounded domain $D^+_t=\mathbb C\setminus D_t$. These harmonic moments are nothing but the coefficient of the Taylor expansion of the logarithmic gravity potential of the domain, $\Pi(z)=(-2/\pi)\int_{D_t}\log |z-z'|d^2z$, $z\in D^+_t$. 

The harmonic moments $(t_0,t_1,\bar t_1,\dotsc)$ form a set of local coordinates in the space of smooth closed planar curves~\cite{KMZ05,Takhtajan2001}. Elementary deformations of the domain $\delta_\zeta$ allow one to introduce the vector field,
\begin{equation}
	\nabla(\zeta) = \p_{t_0} + \sum_{k\geq1}\left(\frac{\zeta^{-k}}{k}\p_{t_k}+\frac{\bar \zeta^{-k}}{k}\p_{\bar t_k}\right),
\end{equation}
which spans the (complexified) tangent space to the space of curves. The variation of any functional of the domain $F(t_0,t_1,\bar t_1,\dotsc)$ with respect to the elementary deformation with the base point $\zeta$ can be considered as the action of the vector field $\nabla(\zeta)$: $\delta_\zeta F=\nu\nabla(\zeta)F \delta t$.

The elementary deformations plays an important role for studying the integrable structure of deterministic LG. The variation of the Green's function, $\nabla(\zeta)G_t(z,z')$, with respect to the elementary deformation~\eqref{delta-h} can be obtained by using Hadamard's variational formula:
\begin{equation}\label{triple}
	\nabla(\zeta)G(z,z') = -\frac{1}{4\pi}\oint_{\p D_t}\p_n G_t(\zeta,s)\p_n G(z,s)\p_n G(z',s)|ds|.
\end{equation}
Remarkably, this expression is symmetric with respect to the permutations of the points $\zeta$, $z$, and $z'$: $\nabla(\zeta)G(z,z')=\nabla(z)G(z',\zeta)=\nabla(z')G(\zeta,z)$. This exchange relation has a form of integrability condition, and possesses a rich underlying algebraic structures~\cite{Krichever04}. It is a key relation which relates the Dirichlet problem and deterministic LG problem with the universal Whitham hierarchy~\cite{Krichever94}, known as dispersionless Toda lattice (see Refs.~\cite{Krichever04,MWZ02}).

Let us consider the variation of the Green's function, $G(z,s)$, at the boundary of the domain, $z\in D_t$, $s\in \p D_t$ with respect to the elementary deformation with the base point $\zeta\in D_t$. From the Hadamard's formula~\eqref{triple} it follows:
\begin{equation}\label{nabla-G}
	\nabla(\zeta)G_t(z,s)=\frac{1}{2}\p_n G_t(\zeta,s)\p_n G_t(z,s),\qquad s\in \p D_t.
\end{equation}
Thus, the variation of the Green's function at the boundary is given by the product of two Poisson kernels, $H_t(\zeta,s)$ and $H_t(z,s)$, defined by eq.~\eqref{H-w}. Now, the following proposition can be formulated.

\begin{prop}\label{p:covdM}
Let us consider elementary deformations of the domain generated by stochastic Loewner flows with the densities $\rho_t^{(1)}$ and $\rho_t^{(2)}$, where
\begin{equation}
	\rho_t^{(n)}(z)=|z'_t(e^{i\phi})|^{-2}\left(\sigma - \frac12\alpha_n\Re\frac{e^{i\phi}+\xi_n}{e^{i\phi}-\xi_n}\right).
\end{equation}
Let $M_t(w,\xi_1)$, $M_t(w,\xi_2)$ be the one-point martingales of the corresponding flows, where $w=w_t(s)$, and $s\in S$ introduced in Lemma~\ref{l:w'}. Then, the covariance of $dM_t(w,\xi_1)$ and $dM_t(w,\xi_2)$ is proportional to the variation of the Dirichlet Green's function $G_t(\zeta_2,s)$ with respect to the elementary deformation of the domain with the base point $\zeta_1$:
\begin{equation}
	\Cov[d M_t(w,\xi_1),d M_t(w,\xi_2)]=-\alpha_1\alpha_2 M_t(w,\xi_1) M_t(w,\xi_2) \nabla(\zeta_1)G_t(\zeta_2,s)dt.
\end{equation}
where $\zeta_1=z_t(1/\bar \xi_1)$, and $\zeta_2=z_t(1/\bar \xi_2)$.
\end{prop}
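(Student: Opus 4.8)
The plan is to compute the covariance directly from the explicit stochastic differential equation for the one-point martingales, and then identify the resulting expression with the Hadamard formula~\eqref{nabla-G}. First I would specialize Proposition~\ref{p:1m} to the one-point case: for the density $\rho_t^{(n)}$ the martingale $M_t(w,\xi_n)$ satisfies, by~\eqref{dM},
\begin{equation}\label{dMn-plan}
	dM_t(w,\xi_n)=-\frac{2i}{\sqrt\kappa}\alpha_n M_t(w,\xi_n)\Re\left(\frac{w+\xi_n}{w-\xi_n}\right)dW,\qquad n=1,2.
\end{equation}
Since each increment is driven by the \emph{same} Brownian differential $dW$ (with $\Cov[dW,dW]=(\kappa/2)dt$), the covariance of the two increments is an ordinary Itô bracket computed from the product of the two diffusion coefficients. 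Thus I would write
\begin{equation}\label{cov-plan}
	\Cov[dM_t(w,\xi_1),dM_t(w,\xi_2)]=\left(-\frac{2i}{\sqrt\kappa}\right)^2\alpha_1\alpha_2 M_t(w,\xi_1)M_t(w,\xi_2)\Re\!\left(\frac{w+\xi_1}{w-\xi_1}\right)\Re\!\left(\frac{w+\xi_2}{w-\xi_2}\right)\Cov[dW,dW].
\end{equation}

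The next step is bookkeeping of the constants: $(-2i/\sqrt\kappa)^2=-4/\kappa$, and $\Cov[dW,dW]=(\kappa/2)dt$, so the prefactor collapses to $-2\,dt$ times $\alpha_1\alpha_2 M_t(w,\xi_1)M_t(w,\xi_2)$ times the product of the two real parts. What remains is to recognize that product as the geometric quantity on the right-hand side. By~\eqref{H-w} the Poisson kernel is $H_t(\zeta_n,s)=|w_t'(s)|\Re[(e^{i\phi}+\xi_n)/(e^{i\phi}-\xi_n)]$ with $e^{i\phi}=w_t(s)=w$ and $\xi_n=\bar w_t(\bar\zeta_n)^{-1}$, and by~\eqref{G-H} the normal derivative satisfies $\p_n G_t(\zeta_n,s)=-H_t(\zeta_n,s)$. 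Hence
\begin{equation}\label{RR-to-HH-plan}
	\Re\!\left(\frac{w+\xi_1}{w-\xi_1}\right)\Re\!\left(\frac{w+\xi_2}{w-\xi_2}\right)=|w_t'(s)|^{-2}\,\p_n G_t(\zeta_1,s)\,\p_n G_t(\zeta_2,s).
\end{equation}
Substituting~\eqref{RR-to-HH-plan} into the collapsed form of~\eqref{cov-plan}, and using the identity $dq=\nu|w_t'(s)|^2dt$, the factor $|w_t'(s)|^{-2}$ converts $dt$ into $(1/\nu)\,dq$; comparing with the Hadamard expression~\eqref{nabla-G}, namely $\nabla(\zeta_1)G_t(\zeta_2,s)=\tfrac12\p_n G_t(\zeta_1,s)\p_n G_t(\zeta_2,s)$, the numerical factors should combine to produce exactly $-\alpha_1\alpha_2 M_t(w,\xi_1)M_t(w,\xi_2)\nabla(\zeta_1)G_t(\zeta_2,s)\,dt$.

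The main obstacle I anticipate is tracking the normalization constants consistently, since several conventions collide here: the factor $-2$ from $(2i/\sqrt\kappa)^2(\kappa/2)$, the factor $\tfrac12$ in the definition~\eqref{nabla-G} of the elementary deformation via the triple Hadamard formula, and the $|w_t'(s)|^{2}$ that is absorbed into either $dt$ or $dq$. I would verify that the normal-derivative conventions (the inward normal, the sign in $\p_n G=-H_t$, and the orientation used in~\eqref{nabla-G}) are applied uniformly, since an overall sign or a factor of two is the most likely place for an error. A subtler point is that $M_t(w,\xi_1)$ and $M_t(w,\xi_2)$ are martingales of two \emph{different} flows (driven by $\rho_t^{(1)}$ and $\rho_t^{(2)}$), so I must be careful that the covariance is still meaningful: the common driving noise $dW$ is what couples them, and the drift terms, which differ between the two flows, do not enter the instantaneous covariance of the increments. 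Once the noise coupling is isolated via~\eqref{dMn-plan} this issue is purely notational, and the remaining computation is the constant-chasing described above.
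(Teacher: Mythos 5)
Your proposal is correct and takes essentially the same route as the paper's proof: specialize the SDE of Proposition~\ref{p:1m} to the one-point case, compute the It\^o covariance through the common driving noise $dW$, identify the product $\Re[(w+\xi_1)/(w-\xi_1)]\,\Re[(w+\xi_2)/(w-\xi_2)]$ with the product of Poisson kernels via~\eqref{H-w} and~\eqref{G-H}, and invoke~\eqref{nabla-G} to recognize $\nabla(\zeta_1)G_t(\zeta_2,s)$. The only small caveat is that the noise is Brownian in the auxiliary time, $\mathbb{E}[dW\,dW]=(\kappa/2)\,dq$ rather than $(\kappa/2)\,dt$, but since you perform the $dq$--$dt$ conversion anyway, this only enters the constant bookkeeping you already flag (and which the paper itself treats loosely).
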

\begin{proof}
Because the driving process $W$ is proportional to standard Brownian motion in the auxiliary time $q$ one has $\mathbb E[dWdW]=(\kappa/2)dq$, where $\mathbb E$ denotes the expectation value. Therefore, the expectation value of the product of the one-point martingales reads:
\begin{multline}\label{EdM}
	\mathbb E[d M_t(w,\xi_1) d M_t(w,\xi_2)]= -\frac{\alpha_1 \alpha_2}{2} M_t(w,\xi_1)M_t(w,\xi_2)\times\\
	\times\Re\left(\frac{w+\xi_1}{w-\xi_1}\right)\Re\left(\frac{w+\xi_2}{w-\xi_2}\right)dq,
\end{multline}

Since $\p_n G$ equals the Poisson kernel~\eqref{G-H}, one concludes that the variation of the Green's function with respect to the elementary deformation~\eqref{nabla-G} is proportional to the product of the Poisson kernels, which appear in the right hand side of eq.~\eqref{EdM}. Because $M_t$'s are the martingales, the covariance is given by the expectation value: $\Cov[d M_t, d M_t]=\mathbb E[d M_t d M_t]$.
\end{proof}

\subsection{Multi-point martingales and local fluctuations of pressure}

Thus far, we have considered the sink at infinity with the rate $Q/2\pi$ as the deterministic sink, i.e., as the sink with the fixed position. However, the stochastic LG model, introduced in Section~\ref{ss:cslg}, can be easily generalized to the case of the source, which moves stochastically with time near infinity.

Henceforth, we assume that the interface dynamics is generated by $N+1$ random processes $\bm{\xi}=(\xi_0,\xi,\dotsc,\xi_N)\in\mathbb D^+$, with the rates $(\alpha_0,\alpha_1,\dotsc,\alpha_n)$, and $\alpha_0=-2\sigma$. The stochastic differential equation for the driving processes~\eqref{dxi} should be modified in a trivial way. One replaces the constant drift toward the origin, $-\sigma dq_n$, with the corresponding interaction terms of the processes $\xi_n$ with $\xi_0$. The system of coupled stochastic differential equations takes the form:
\begin{equation}
	d\log \xi_n=-\sigma(1-\delta_{n,0}/2)dq_n + g_n(\bm{\xi},\bar{\bm{\xi}})dq_n + idW,\quad n=0,1,\dotsc,N.
\end{equation}
Here, the function $g_n(\bm{\xi},\bar{\bm{\xi}})$ reads
\begin{equation}
	g_n(\bm{\xi},\bm{\xi})=-\frac{\kappa}{2}\xi_n\frac{\p}{\p \xi_n}\log Z_{N+1}(\bm{\xi},\bar{\bm{\xi}})+\frac12\sum_{m=0,m\neq n}^N\lambda_m \frac{\xi_n+\xi_m}{\xi_n-\xi_m}+\frac12\sum_{m=0}^N \lambda_m\frac{\xi_n+1/\bar\xi_m}{\xi_n-1/\bar\xi_m},
\end{equation}
with $\lambda_0=-\sigma$, $\lambda_n=1$, $n=1,2,\dotsc,N$, and the function $Z_{N+1}$ is given by the product
\begin{equation}
	Z_{N+1}(\bm{\xi},\bm{\xi}) = \prod_{k<n}^N|\xi_k-\xi_n|^{4/\kappa}\prod_{k\leq n}^N|1-\xi_k\bar\xi_n|^{4/\kappa}\prod_{k=1}^{N}|\xi_k|^{-4\sigma/\kappa},
\end{equation}
where we set $\xi_0=0$. 

As explained in the beginning of Section~\ref{ss:pressure}, the dynamics of the interface $\p D_t$ in the $z$ plane is generated by the set of virtual sources $\bm{\zeta}=(\zeta_0,\zeta_1,\dotsc,\zeta_N)\in D_t$, where $\zeta_n=z_t(1/\bar \xi_n)$, $n=0,1,\dotsc,N$. The effective pressure field in the vicinity of the interface is given by the linear combination of the Green's functions~\eqref{P-def}. Therefore, we have the following corollary of Lemma~\ref{p:covdM}.

\begin{cor}
Let us consider the function~\eqref{h-def} of $N+1$ random processes $\xi_n$, $n=0,1,2\cdots,N$. Suppose that $\lambda_0=-\sigma$ and $\mathbb E(\xi_0)=0$. Then, the variance of the exponential martingale, $M_t(w,\bm{\xi})$ determines the variation of pressure $d P_t(z,\bm{\zeta})$ with respect to stochastic LG,
\begin{equation}\label{c:cov}
	d\Cov[M_t(w,\bm{\xi}),M_t(w,\bm{\xi})]=M^2_t(w,\bm{\xi})d P_t(s,\bm{\zeta}),
\end{equation}
where $w=w_t(s)$, $\xi_n=1/\bar w_t(\bar \zeta_n)$, and  $s\in S$ introduced in Lemma~\ref{l:w'}.
\end{cor}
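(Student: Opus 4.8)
The plan is to read the variance increment $d\Cov[M_t,M_t]$ directly off the driftless stochastic differential equation for the exponential martingale, and to recognize its diffusion coefficient as $-2$ times the pressure-velocity bracket appearing in Lemma~\ref{l:dp}. First I would extend Proposition~\ref{p:1m} to the enlarged system of $N+1$ driving processes $\bm\xi=(\xi_0,\xi_1,\dots,\xi_N)$. Since the construction of $\mathfrak h$ and the cancellation of drift terms in Lemma~\ref{l:h} are insensitive to the number of processes, the identical computation yields
\begin{equation}
dM_t(w,\bm\xi)=-\frac{2i}{\sqrt\kappa}\,M_t(w,\bm\xi)\sum_{n=0}^N\alpha_n\Re\!\left(\frac{w+\xi_n}{w-\xi_n}\right)dW,
\end{equation}
with the same convention $2\Re f(w)=f(w)+f^*(w^*)$.

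Next I would exploit the two hypotheses $\alpha_0=-2\sigma$ and $\mathbb E(\xi_0)=0$. Pinning the distinguished process at $\xi_0=0$ collapses its summand to the constant $\alpha_0\Re\!\left(\tfrac{w+0}{w-0}\right)=\alpha_0=-2\sigma$, so that the diffusion coefficient factorizes as
\begin{equation}
\sum_{n=0}^N\alpha_n\Re\!\left(\frac{w+\xi_n}{w-\xi_n}\right)=-2\sigma+\sum_{n=1}^N\alpha_n\Re\!\left(\frac{w+\xi_n}{w-\xi_n}\right)=-2\left(\sigma-\frac12\sum_{n=1}^N\alpha_n\Re\frac{w+\xi_n}{w-\xi_n}\right).
\end{equation}
The very same substitution renders the pressure representation uniform: since $\zeta_0=z_t(1/\bar\xi_0)=z_t(\infty)=\infty$, the term $-\tfrac12\alpha_0 G_t(z,\zeta_0)$ reproduces exactly the deterministic sink $\sigma G_t(z,\infty)$, so that $P_t(z,\bm\zeta)$ of~\eqref{P-def} becomes $-\tfrac12\sum_{n=0}^N\alpha_n G_t(z,\zeta_n)$. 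I would check that $\xi_0=0$ is consistent with the mean dynamics of~\eqref{dxi}–\eqref{dbarxi} with $\lambda_0=-\sigma$, i.e.\ that $\mathbb E(\xi_0)=0$ is preserved, so that freezing the base point at infinity is legitimate.

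With these observations the computation is the diagonal analogue of Proposition~\ref{p:covdM}. Because $M_t$ has vanishing drift, its variance increment equals the quadratic-variation increment, $d\Cov[M_t,M_t]=\mathbb E[(dM_t)^2]$; inserting the SDE and using $\mathbb E[dW\,dW]=(\kappa/2)\,dq$ produces $M_t^2(w,\bm\xi)$ multiplied by the square of the bracket $\bigl(\sigma-\tfrac12\sum_{n=1}^N\alpha_n\Re\tfrac{w+\xi_n}{w-\xi_n}\bigr)$ and by $dq$. By Lemma~\ref{l:dp} this is precisely $M_t^2(w,\bm\xi)\,dP_t(s,\bm\zeta)$, which is the asserted identity~\eqref{c:cov}.

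The step I expect to cause the most trouble is not the structural identity but the bookkeeping: tracking the factors of $i$ and the normalization of the symmetrized real part $2\Re f=f+f^*$ through the squaring, together with the It\^o contractions between the $w$- and $w^*$-flows on the two sheets of $\Sigma$, so that the numerical prefactor comes out exactly $1$ as stated. The second delicate point is justifying that the \emph{fluctuating} process $\xi_0$ may be pinned at the origin; this rests on $\mathbb E(\xi_0)=0$ being an invariant of the enlarged system with $\lambda_0=-\sigma$, which should be verified from the explicit drift $g_0$ rather than assumed.
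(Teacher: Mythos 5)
Your route is essentially the paper's: its entire proof is the one-line citation of Lemma~\ref{l:dp} together with Proposition~\ref{p:covdM}, whose content is precisely your computation --- square the driftless SDE for the ($N+1$)-point extension of the martingale in Proposition~\ref{p:1m}, contract with $\mathbb E[dW\,dW]=(\kappa/2)\,dq$, collapse the $n=0$ term using $\alpha_0=-2\sigma$ and $\xi_0=0$, and recognize the squared bracket as the pressure variation of Lemma~\ref{l:dp}. The two caveats you flag (the exact numerical prefactor, including the sign generated by $i^2$, and the legitimacy of pinning the fluctuating process $\xi_0$ at the origin) are left equally unaddressed by the paper's own one-sentence proof, so they do not distinguish your argument from the published one.
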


\begin{proof}
The statement immediately follows from Lemma~\ref{l:dp} and Proposition~\ref{p:covdM}.
\end{proof}

\section{Conclusion and discussion}\label{s:conclusion}
In conclusion, we briefly summarize the main result of this work. We studied stochastic LG problem in the framework of Loewner-Kufarev equation drive by the nonlocal random measure on the unit circle~\eqref{rho-slg}. Because of intrinsic instability of the growth process, tiny fluctuations of pressure in the vicinity of the interface results in the formation of universal patterns with deep fjords in the long time asymptotic, which can be studied both numerically and analytically~\cite{Ale19a}. These patterns are closely related with the so-called logarithmic solutions to deterministic LG problem, which are free of finite-time singularities (cusps) on the boundary.

In Ref.~\cite{Ale19b} it was shown, that certain quotients of correlation functions of boundary CFTs do not change with respect to conformal transformations generated by the stochastic Loewner flow. Hence, these functions are martingales of stochastic LG. In this work we studied martingales by using conventional methods of stochastic calculus. In particular, we proposed a family of functions~\eqref{M-def} on the Schottky double of the exterior domain $D_t$, and studied their variation with respect to stochastic Loewner flow. The proposed family of martingales are closely connected to CFT correlation functions in the Coulomb gas formalism. A direct connection between the martingales and stochastic LG problem relies on Hadamard's variational formula for the Green's function of the Dirichlet boundary problem. We showed, that the variation of pressure near the interface can be written in terms of the covariance of the martingales on the Schottky double.

Let us  briefly discuss a relation between the family of martingales proposed in this work, and the results of the previous work~\cite{Ale19b}. The exponential martingales introduced in Proposition~\ref{p:1m} are the correlation functions of the so-called vertex operators $V_{\alpha,\bar \alpha}(z,\bar z)$. The vertex operators are exponential functions, $V_{\alpha,\bar\alpha}(z,\bar z)=\exp(\alpha\varphi(z)+\bar \alpha\bar\varphi(\bar z))$, of the Gaussian free field $\varphi(z)$. Correlation functions of product of vertex operators can be easily computed due to the simple form of the GFF correlators, $\langle\varphi(z)\varphi(z')\rangle=-\log (z-z')$, $\langle\varphi(z)\bar\varphi(\bar z')\rangle=0$, and $\langle\bar \varphi(\bar z)\bar\varphi(\bar z')\rangle=-\log (\bar z-\bar z')$. In particular, the function $Z_N$, given by the product~\eqref{Z-def}, is the correlation function of $N$ vertex operators. We will not go into detail here, and consider the coupling of the Gaussian free field to the stochastic LG problem in future publication.

By the above short remark we wish to emphasize, that the family of martingales proposed in this paper is a subclass of martingales proposed earlier~\cite{Ale19b}. The Coulomb gas formalism, mentioned above, allows one not only to obtain the correlation functions explicitly, but also to determine critical exponents of various lattice models of statistical mechanics. Besides, this method was applied to find multifractal spectrum of harmonic measure of conformally invariant SLE curves~\cite{Gruzberg06}, because the Coulomb gas formalism captures the singular behavior of CFT correlation functions. Hence, the next step is to use the results obtained in this paper to study critical exponents associated with the stochastic LG problem.

Another motivation to study stochastic interface dynamics in the Coulomb gas framework is the Duplantier-Sheffield approach to Liouville quantum gravity~\cite{DS11}. In particular, the Gaussian free field, which naturally appears in the Coulomb gas formalism is a natural object in Liouville quantum gravity, which determines the random measure on the space of curves. We believe, that this method can be used in order to couple stochastic LG to random geometry.

\section*{Acknowledgments}
The work is supported by the Russian Science Foundation grant 19-71-30002.

\appendix

\section{The pattern formation in stochastic LG}\label{a:patterns}

The LG problem can be rewritten in terms of the Schwarz function, which provides an elegant geometrical interpretation of the interface dynamics. The Schwarz function is defined as what follows (see Ref.~\cite{DavisBook} for details). Let $\p D_t$ be a sufficiently smooth curve drawn on the plane. It can be determined by the equation $F(x,y)=0$. By replacing $(x,y)$ by the complex coordinates $(z=x+iy,\bar z=x-iy)$, and solving the equation $F((z+\bar z)/2,(z-\bar z)/2)=0$ with respect to $\bar z$ one obtains
\begin{equation}
	\bar z = S_t(z),\quad z\in \p D_t.
\end{equation}
The analytical continuation of $S_t(z)$ away from the curve is called the Schwarz function. The Schwarz function can be decomposed into a sum of two function, $S^+_t$ and $S^-_t$, that are regular in $D^+_t$ and $D_t$ respectively:
\begin{equation}\label{Spm}
	S^+_t(z)=\sum_{n=1}^\infty nt_nz^{n-1},\qquad S_t^-(z)=\frac{t_0}{z}+\sum_{n=1}^\infty v_n z^{-n-1},
\end{equation}
where $t_n$ and $v_n$ are the moments of the exterior and interior domains, $t_n=-\frac{1}{\pi n}\int_{D_t}z^{-n}d^2z$, $v_n=\frac{1}{\pi}\int_{D^+_t}z^n d^2z$. The deterministic LG problem~\eqref{z_eq},~\eqref{rho-lg} is known to be equivalent to the following equation~\cite{HowisonEJAM}:
\begin{equation}\label{SW}
	\p_t S_t(z)=2 \p_z W_t(z),
\end{equation}
where $W_t(z)= - P_t(z) + i\Psi_t(z)$ is the complex potential given by the sum of negative pressure $P_t(z)$ and the stream function $\Psi_t(z)$. Since the complex potential for the idealized deterministic LG is $W_t(z)=(Q/2\pi)\log w_t(z)$, eq.~\eqref{SW} reduces to $\p_t S^+_t(z)=0$ for $z\in D_t$. From this equation one concludes that all moments $t_n$, $n=1,2,\dotsc$, are integrals of motion.

Contrary to the deterministic problem, in the case of stochastic LG the moments, $t_k$, vary with time. The effective pressure field for stochastic LG~\eqref{P-def} attributes the local fluctuations of pressure in the vicinity of the interface to the set virtual sources in $D_t$, so that $\nabla^2 P(z,\bm{\xi})=\nu\sum_{n=1}^N \delta^{(2)}(z-\zeta_n)$, where $\zeta_n=z_t(1/\bar \xi_n)\in D_t$, $n=1,2,\dotsc,N$. Because the points $\zeta_n$ are the only singularities of $W_t(z)$ (except infinity) in $D_t$ from eq.~\eqref{SW} one obtains
\begin{equation}\label{a:S+}
	\p_t S^+_t(z)=\nu\sum_{n=1}^N\frac{1}{z-\zeta_n}.
\end{equation}
If $\zeta_n=const$ the Schwarz function has simple poles in $D_t$ with the residues linearly growing with time. From the physical point of view, this process represents LG with oil wells at the points $\zeta_n$ with the rates $\nu$. In stochastic LG the time evolution of virtual sources, $\zeta_n$, $n=1,2,\dotsc,N$, results in the formation of $N$ cuts of the Schwarz function in $D_t$. Hence, $S^+_t$, can be represented as a sum of the Cauchy type integrals,
\begin{equation}\label{a:S+Cauchy}
	S_t^+(z)=S^+_{0}(z)+2\nu\sum_{n=1}^N\int_{\gamma_n(t)}\frac{P_n(l)dl}{z-l},
\end{equation}
where the integration contours $\gamma_n(t)$, $n=1,2,\dotsc,N$, are the trajectories of the points $\zeta_n=z_t(1/\bar \xi_n)$ in $D_t$, and the Cauchy densities $P_n(l)$ for stochastic LG are determined by the velocities of these points.

It is instructive to consider stochastic LG in a discrete time framework. Let $D_I$ be domain occupied by viscous fluid at the time instant $t_I=I \delta t$. Then, the Schwarz function of its boundary, $\p D_I$, can be obtained from eq.~\eqref{a:S+} (see Ref.~\cite{Ale19a} for detail):
\begin{equation}\label{a:S+log}
	S^+_I(z)=S^+_0(z)+\sum_{n=1}^N\sum_{i=0}^Ic_{n,i}\log (z-\zeta_n(t_i)).
\end{equation}
Here, the coefficients $c_{n,i}$ in front of the logarithms are random variables, determined by velocities of the virtual source, $\zeta_n(t_I)=z_I(1/\bar \xi_n(t_I))\in D_I$, namely,
\begin{equation}\label{a:c-def}
	c_{n,i}=\nu \delta t\left(\frac{1}{\delta \zeta_n(t_i)}-\frac{1}{\delta \zeta_n(t_{i-1})}\right),
\end{equation}
where $\delta\zeta_n(t_i)=\zeta_n(t_i)-\zeta_n(t_{i-1})$ are the increments of the branch cuts of the Schwarz function, $S_t^+$, in $D_t$ during $i$th time unit. Note, that the logarithmic representation of the Schwarz function~\eqref{a:S+log} is connected to the Causchy representation~\eqref{a:S+Cauchy} upon the integration by parts.

\begin{figure}
\centering
\includegraphics[width=1\textwidth]{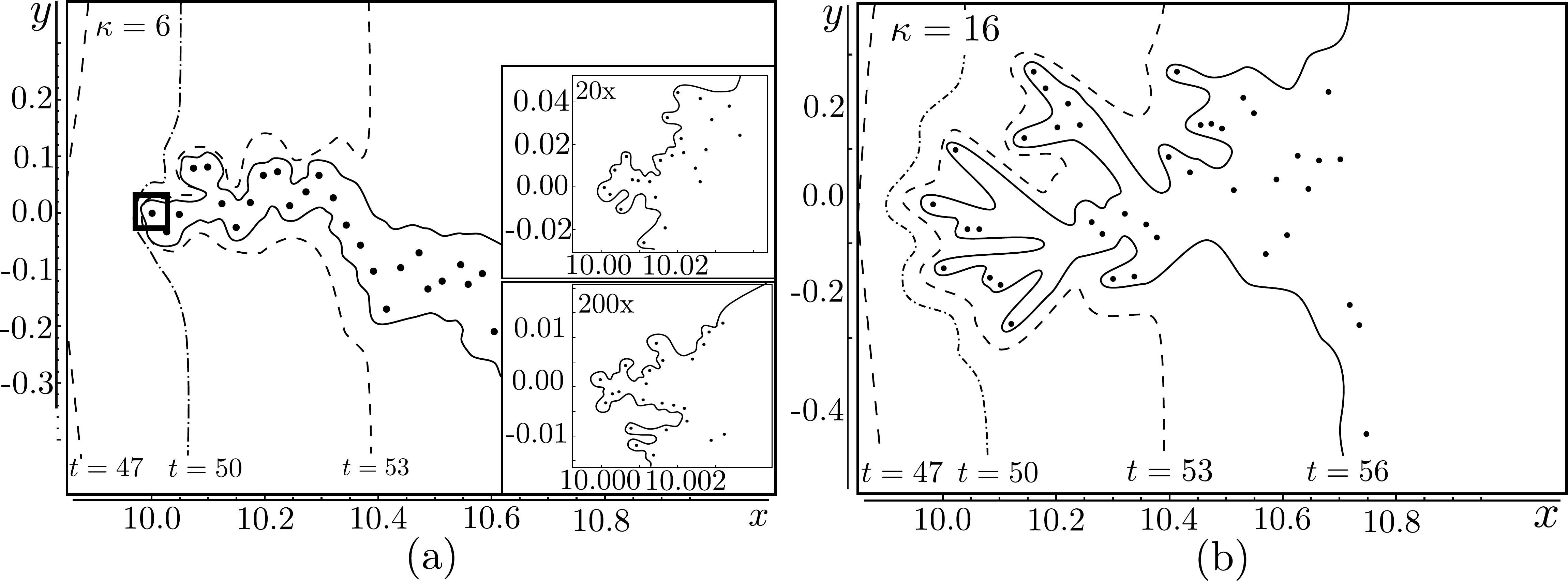}
\caption{
\label{single1}
A sequence of interfaces, $y=\Im z(e^{i\phi}, t)$ as a function of $x=\Re z(e^{i\phi},t)$, near the bottom of the fjord is plotted for two different values of the noise strength: (a) corresponds to $\kappa=6$ and (b) corresponds to $\kappa=16$. The widths of fjords are determined by the ratio $\nu/Q=0.04$. In the discrete time framework, the random process $\zeta(t)=z_t(1/\bar \xi(t))$ generates a variety of tiny fjords on the microscale, which correspond to logarithmic terms in eq.~\eqref{a:zI}. The boxed region in (a) shows the interface in the vicinity of the point $x=10$, $y=0$, magnified 20-fold and 200-fold correspondingly. Since the interface exhibits similar patterns at increasingly small scales, the interface in the bottoms of fjords has a fractal structure.}
\end{figure}

The geometrical interpretation of logarithmic singularities of the Schwarz function is straightforward~\cite{DMW98}: each logarithmic term in the right hand side of eq.~\eqref{a:S+log} corresponds to the fjords with parallel walls in the long time asymptotic. The tips of the fjords are located at the points $\{\{\zeta_n(t_i)\}_{i=0}^I\}_{n=1}^N$. The widths of the fjords are given by $\pi|c_{n,i}|$, while their orientations are specified by $\arg(c_{n,i})$. In Fig.~\ref{single1} we show a typical shape of the interface at the bottom of the fjord in the long time asymptotic. A variety of tiny fjords on the microscale form a fractal pattern in the vicinity of the bottom of deep fjords on the macroscale.

A connection of the Schwarz function to the conformal map, $S_t(z)=\bar z_t(1/w)$, where $z=z_t(w)$, implies a one-to-one correspondence between singularities of $z_t$ inside the unit disk, $\mathbb D^+$, in the $w$ plane, and singularities of $S^+_t$ in $D_t$. In particular, if $S^+_t$ has a logarithmic branch point at $\zeta_n$ with the coefficient $c_{n,i}$, then the conformal map possesses the same branch point at $\xi^i_n(t_I)=1/\bar w_I(\bar \zeta_n(t_i))$ with the coefficient $\bar c_{n,i}$. Hence, from~\eqref{a:S+log} one can immediately determine the conformal map $z_I:\mathbb D\to D_I$, namely,
\begin{equation}\label{a:zI}
	z_I(w)=r_I w+\sum_{k=1}^K \alpha_k\log (w/a_k(t_I)-1) + \sum_{n=1}^N\sum_{i=0}^I \bar c_{n,i}\log (w/\xi_n^i(t_I)-1),
\end{equation}
where $\alpha_k=const$, $k=1,2,\dotsc,K$. Here the logarithmic terms, $\alpha_k\log (w/a_k(t_I)-1)$, represent tiny deviations of the initial domain $D_0$ from a perfect circle. If the initial interface, $\p D_{0}$, is a circle, all $\alpha_k=0$. From eq.~\eqref{a:S+log} it follows, that $z_t(1/\bar a_k(t))$ are the constant of motion. Besides, logarithmic branch points of $S^{+}_I$ at the point $\zeta_n(t_i)=z_I(1/\bar \xi_n^i(t_I))$ with $i<I$ also stay constants with time for the time instants. The positions of these points, i.e., the trajectories of the processes $\zeta_n(t)$ in the $z$ plane, determine the centerlines of the fjords on the macroscale.

Contrary to the deterministic LG, where the number of singularities of the conformal map does not change, in stochastic LG the total number of singularities growth linearly with time. Let us consider a family of abelian domains, whose uniformization maps $w_t$ has rational derivatives $w_t'(z)=P(z)/Q(z)$. Let us define the degree of the abelian domain, $\deg w_t$, as follows: $\deg w_t=\max(\deg P, \deg Q)$. Let us consider the abelian domain $D_i$ of degree $M$ at time instant $t_i$. Then, a single step of stochastic LG generated by density~\eqref{rho-slg} results in the algebraic domain $D_{i+1}$ of degree not higher than $M+N$. In the continuum limit, $\delta t\to 0$, the generation of singularities of the Schwarz function can be interpreted as the evolution of the branch-cuts of the Schwarz function~\eqref{a:S+Cauchy} with time.

Now, let us briefly describe a possible asymptotic behavior of zeroes and poles of the derivative, $z'_t(w)$, of the conformal map. Although stochastic LG does not preserve the total number of poles of $z'_t(w)$, the interface evolution during $i$th growth step is equivalent to the idealized deterministic Laplacian in presence of $N+1$ oil sources: $N$ oil wells at $\zeta_n=z_t(1/\bar\xi_n(t_i))$ with rates $\nu$, and the oil sink at infinity with the rate $(Q/2\pi)+N\nu\sum \alpha_n$. Hence, an asymptotic behavior of poles and zeroes can be obtained by using the results for deterministic LG~\cite{DMW98}.

Let us consider the conformal map $z_I:\mathbb D\to D_I$, which has the form~\eqref{a:zI} with $K$ logarithmic branch-cuts inherited from the initial conditions (i.e., from singularities of $S^+_0(z)$), and $N I$ logarithmic terms generated by stochastic LG with density~\eqref{rho-slg}. Suppose that the positions of all singularities of $z_I(w)$ are known. Let us show, that stochastic LG during the next time unit, $I+1$, can be reduced to the many-body-type problem for the motion of poles of $z'_t(w)$ inside the unit disk described by a system of stochastic ordinary differential equations. Since stochastic LG is equivalent to the idealized deterministic growth with moving source, one has the following integrals of motion:
\begin{equation}\label{a:const}
	\beta_k=z_{I+1}(1/\bar a_k(t_{I+1})),\quad \gamma_{n,i}=z_{I+1}(1/\bar \xi_n^i(t_{I+1})),
\end{equation}
for $k=1,2,\dotsc,K$, $n=1,2,\dotsc, N$, and $i=1,2,\dotsc, I$. During the $(I+1)$th growth step, the conformal map develop $N$ new poles at the points, $\xi^{I+1}_n(t_{I+1})$, $n=1,2,\dotsc,N$, given by solutions to the set of coupled stochastic differential equations~\eqref{dxi}. The coefficients, $c_{n,I+1}$ in front of the logarithmic terms in $z_{I+1}(w)$ (see eq.~\eqref{a:zI}) are specified by $N$ equations~\eqref{a:A}. The conformal radius, $r_{I+1}$, can be determined by the area $\mathcal A$, enclosed by $\p D_{I+1}$. Since the total growth rate is $Q$, one has $\p_t\mathcal A=Q$, we get the equation
\begin{equation}\label{a:A}
	t_{I+1}=\frac{r_{I+1}^2-r_0^2}{2}+\sum_{m,n}\frac{\bm{\alpha}_m\bar{\bm{\alpha}}_n}{2}\log (1-\bm{a}_m\bar{\bm{a}}_n)-\sum_{k,l}\frac{\alpha_k\bar \alpha_l}{2}\log (1 - a_k(t_0)\bar a_l(t_0)),
\end{equation}
where $\bm{\alpha}_m$ and $\bm{a}_m$ are the $m$th elements from the sets $\bm{\alpha}=(\{\alpha_k\}_{k=1}^K,\{\{\bar c_{n,i}\}_{n=1}^N\}_{i=1}^{I+1})$ and $\bm{a}=(\{a_k(t_{I+1})\}_{k=1}^K,\{\{\xi^i_n(t_{I+1})\}_{n=1}^N\}_{i=1}^{I+1})$. Thus, one obtains $K+N(I+1)+N+1$ equations,~\eqref{a:const},~\eqref{dxi}, and~\eqref{a:A}, for the required parameters of the map, $r_{I+1}$, $a_k(t_{I+1})$, $\xi_n^i(t_{I+1})$, and $\bar c_{n,I+1}$, for $K=1,2,\dotsc, K$, $n=1,2,\dotsc,N$, and $i=1,2,\dotsc,I+1$. Hence, we conclude, that stochastic LG is a self-consisted problem. It can be recast in the many-body problem for the dynamics of poles of the conformal map inside the unit circle.

It is known, that the logarithmic solution~\eqref{a:zI} to idealized LG problem does not produce cusp at the interface for a wide class of initial conditions~\cite{DMW98}. Therefore, poles and zeroes of $z'_t(w)$ never hit the unit circle. A possible asymptotic behavior of poles can be obtained by considering the large time asymptotic of the right hand sides of eqs.~\eqref{a:const}. Because of eq.~\eqref{a:A}, the function $r^2_t$ growth with time, i.e., $r_t\to\infty$ as $t\to\infty$. Then, the only possibility to cancel the divergent terms in $\gamma_{n,i}$ and $\beta_k$ is to suppose that some points from the sets $\{\xi_n^i(t_I)\}$ and $\{a_k(t_I)\}$ tend to the unit circle as $t_I\to\infty$ following an exponential law. We do not give a careful analysis of the asymptotic behavior and refer the reader to Ref.~\cite{DMW98} for the similar analysis in the case of deterministic LG.

\bibliographystyle{alpha}
\bibliography{biblio}{}

\end{document}